\newtheorem{theorem}{Theorem}
\newtheorem{corollary}{Corollary}
\newtheorem{proposition}{Proposition}
\newtheorem{lemma}{Lemma}
\newtheorem{remark}{Remark}
\begin{document}

\title{Impact of Residual Transmit RF Impairments on Training-Based MIMO Systems}

\author{Xinlin~Zhang,~\IEEEmembership{Student Member,~IEEE,}
        Michail~Matthaiou,~\IEEEmembership{Senior Member,~IEEE,}
        Mikael~Coldrey,~\IEEEmembership{Member,~IEEE,}
        and~Emil~Bj\"ornson,~\IEEEmembership{Member,~IEEE}% <-this % stops a space
\thanks{X. Zhang is with the Department
of Signals and Systems, Chalmers University of Technology, 412 96,
Gothenburg, Sweden (email: xinlin@chalmers.se).}% <-this % stops a space
\thanks{M. Matthaiou is with the School of Electronics, Electrical Engineering and Computer Science, Queen's University Belfast, Belfast, U.K. and with the Department of Signals and Systems, Chalmers University of Technology, 412 96, Gothenburg, Sweden (email: m.matthaiou@qub.ac.uk).}
\thanks{M. Coldrey is with Ericsson Research Gothenburg, Sweden (email: mikael.coldrey@ericsson.com).}% <-this % stops a space
\thanks{E.~Bj\"ornson is with the Department of Electrical Engineering (ISY), Link\"{o}ping University, Link\"{o}ping, Sweden (email: emil.bjornson@liu.se).}
\thanks{Parts of this paper were published at the IEEE International Conference on Communications (ICC), Sydney, Australia, June 2014 \cite{xinlin2014training_ICC}.}}

%\author{\IEEEauthorblockN{Xinlin~Zhang\IEEEauthorrefmark{1}, Michail~Matthaiou\IEEEauthorrefmark{1}, Mikael Coldrey\IEEEauthorrefmark{2},  and Emil~Bj\"ornson\IEEEauthorrefmark{3}\IEEEauthorrefmark{4}}
%\IEEEauthorblockA{\IEEEauthorrefmark{1}Department of Signals and Systems, Chalmers University of Technology, Gothenburg, Sweden}
%\IEEEauthorblockA{\IEEEauthorrefmark{2}Ericsson Research, Ericsson AB, Gothenburg, Sweden}
%\IEEEauthorblockA{\IEEEauthorrefmark{3}Alcatel-Lucent Chair on Flexible Radio, SUPELEC, Gif-sur-Yvette, France}
%\IEEEauthorblockA{\IEEEauthorrefmark{4}Department of Signal Processing, ACCESS, KTH Royal Institute of Technology, Stockholm, Sweden}
%
%E-mail: \{xinlin, michail.matthaiou\}@chalmers.se, mikael.coldrey@ericsson.com, emil.bjornson@supelec.fr}

\maketitle

\begin{abstract}

Radio-frequency (RF) impairments, which intimately exist in wireless communication systems, can severely limit the performance of multiple-input multiple-output (MIMO) systems. Although we can resort to compensation schemes to mitigate part of these impairments, a certain amount of residual impairments always persists. In this paper, we consider a training-based point-to-point MIMO system with residual transmit RF impairments (RTRI) using spatial multiplexing transmission. Specifically, we derive a new linear channel estimator for the proposed model, and show that RTRI create an estimation error floor in the high signal-to-noise ratio (SNR) regime. Moreover,  we derive closed-form expressions for the signal-to-noise-and-interference ratio (SINR) distributions, along with analytical expressions for the ergodic achievable rates of zero-forcing, maximum ratio combining, and minimum mean-squared error receivers, respectively. In addition, we optimize the ergodic achievable rates with respect to the training sequence length, and demonstrate that finite dimensional systems with RTRI generally require more training at high SNRs than those with ideal hardware. At last, we extend our analysis to large-scale MIMO configurations, and derive deterministic equivalents of the ergodic achievable rates. It is shown that, by deploying large receive antenna arrays, the extra  training requirements due to RTRI can be eliminated. In fact, with sufficiently large number of receive antennas, systems with RTRI may even need less training than systems with ideal hardware.

%and observe an irreducible estimation error floor caused by RTRI. Moreover, we optimize over power allocation and training symbol length, and find that the achievable rate reaches a ceiling at high signal-to-noise ratio (SNR). In addition, we deduce that the optimal training length may be larger than that of an ideal system. We also analyze systems with different linear receiver architectures, namely zero-forcing (ZF), maximum ratio combining (MRC), and minimum mean-squared error (MMSE) receivers, for which we derive exact signal-to-interference-and-noise ratio (SINR) cumulative distribution functions (CDF) and evaluate their achievable rates. We also perform a high and low SNR analysis, which provide us with concise and insightful results. We also extend our results to massive MIMO systems with large receive antenna arrays. It is observed that the achievable rates of different receivers converge all to the same value, which does not depend on the SNR, while the optimal training length should always equal to the number of transmit antennas. More importantly, we conclude that the so-called ``power scaling law'' only holds at low SNR with finite dimensional systems, since it vanishes as the SNR or the number of receive antennas become large.

\end{abstract}

\vspace{-0.5cm}
\begin{IEEEkeywords}
Hardware impairments, large-scale MIMO,  pilot optimization, random matrix theory, training-based channel estimation
\end{IEEEkeywords}
\thispagestyle{empty}

\vspace{-0.5cm}
\section{Introduction}
Point-to-point multiple-input multiple-output (MIMO) systems offer wireless communication with high data rates, without requiring additional bandwidth or transmit power. The pioneering works of \cite{telatar1999capacity,foschini1998a} demonstrated a linear growth in capacity in rich scattering environments by deploying multiple antennas at both the transmitter and receiver sides. However, to fully reap the advantages that MIMO systems can offer, perfect instantaneous channel state information (CSI) is essential, especially at the receiver.

In many practical systems, a training-based (or pilot-based) transmission scheme is usually adopted to acquire CSI. In such systems, the transmitter sends a block of symbols which contain both pilot and data information. The receiver estimates the instantaneous channel realization and uses the acquired CSI to retrieve the intended data. Under ideal radio-frequency (RF) hardware assumptions, channel estimation is a  well investigated topic (e.g., \cite{hassibi2003,  biguesh2006training, mikael1, emil2010est, Love2014Training1_JSTSP, Love2014Training2_JSTSP}); however, these assumptions are quite unrealistic in practice. RF impairments, such as in-phase/quadrature-phase (I/Q) imbalance, high power amplifier non-linearities, and oscillator phase noise, are known to have a detrimental impact on practical MIMO systems \cite{schenk2008rf, studer2010residual, Jingya2014IQI_TCOM, Giuseppe2014PA_TCOM}. The influence of such individual hardware impairments is usually mitigated by using analog and digital signal processing algorithms \cite{schenk2008rf}. However, these techniques are not able to completely remove hardware impairments, such that a certain amount of residual distortions always remains.  These residual impairments stem from, for example, time-varying hardware characteristics which cannot be accurately parameterized and estimated, as well as, the randomness induced by different types of noise and imperfect compensation schemes.

The statistical properties of residual impairments have been investigated in a number of existing works.
The authors in \cite{schenk2008rf, studer2010residual}  characterized and verified experimentally that the residual distortion caused by the impairments behaves as additive and independent Gaussian noise.  This Gaussian behavior can be understood by the central limit theorem, when the distortions from many independent and different sources add up together.
In particular, for Gaussian input, one can interpret the above observation using the Bussgang Theorem \cite{Bussgang}, which states that the output process of a Gaussian input process through a non-ideal hardware is the sum of a scaled input process and a distortion process which is uncorrected with the input process \cite{Ochiai2002Clipped_TCOM, Wenyi2012Framework_TCOM}.
The very recent paper \cite{Ulf2014Impairments_GLOBECOM} has compared the above additive Gaussian model with a sophisticated hardware impairment model in massive MIMO systems, which comprised several main sources of hardware impairments. Their results  verified that the above Gaussian approximation is sufficiently accurate despite of its simplicity.

The impact of residual hardware impairments has been only scarcely investigated in few works recently.
In \cite{Emil2013imp_COML, xinlin2014capacity_ICC, Michalis2013Dhop_TCOM, Michalis2013TwoWay_COML, emil2013est}, the impact of residual RF impairments was studied under different system scenarios, with a few important conclusions being drawn. For example, \cite{Emil2013imp_COML, xinlin2014capacity_ICC} showed that impairments fundamentally limit the channel capacity in the high signal-to-noise (SNR) regime, while for low transmit power, such impact is negligible. Moreover, \cite{xinlin2014capacity_ICC} has shown that by deploying large antenna arrays at both the transmitter and receiver, the influence of impairments can be alleviated. Nonetheless, these results hold only for optimal receivers which induce high computational complexity, and are not of particular interest for larger-scale MIMO topologies. In \cite{emil2013est}, the authors claimed that the channel capacity is mainly limited by the residual impairments at the user terminals, and also proved that large-scale MIMO systems were able to tolerate larger hardware impairments than conventional small-scale MIMO systems. The authors therein also showed a few limiting effects induced by impairments in the large-antenna regime, such as estimation accuracy and capacity ceiling; however, they did not pursue any resource allocation analysis, which is of pivotal importance for the performance optimization of training-based MIMO systems.

Motivated by the above mentioned limitations, in this paper, we consider a training-based MIMO system affected by residual transmit RF impairments (RTRI). Our main goal is to investigate the impact of RTRI on the rates that are achievable with different linear receivers, as well as to find the optimal training length. More specifically, our main contributions are:
\begin{itemize}
\item We derive a linear minimum mean-squared error (LMMSE) channel estimator of the channel with optimized training matrix structure. We analytically show that RTRI introduce an irreducible estimation error floor in the high SNR regime.
\end{itemize}
\begin{itemize}
\item We derive closed-form expressions which tightly approximate the SINR distributions (outage probabilities) and ergodic achievable rates for zero-forcing (ZF), maximum ratio combining (MRC), and minimum mean-squared error (MMSE) receivers. In contrast to \cite{emil2013est}, these results hold \textit{for any finite number of antennas}. We observe achievable rate ceilings in the high SNR regime, which are analytically quantified, while in the low SNR regime, we show that the impact of RTRI vanishes. Moreover, we demonstrate that MRC receivers are much more resilient to hardware impairments than ZF and MMSE receivers.
\item We find the optimal training length of MIMO systems with RTRI. We show that, if ZF and MMSE receivers are adopted, the optimal training length can be much larger than that of systems with ideal hardware. In general, higher levels of RTRI impose more training requirements.
\item Finally, we extend our analysis to a general large-scale MIMO scenario. With the help of random matrix theory, we derive \textit{deterministic~equivalents} of the ergodic achievable rates. The optimal training length is thereafter found by solving an asymptotically optimal convex optimization problem. An interesting observation is that with large number of receive antennas, systems with RTRI may need even less training than ideal hardware systems.
\end{itemize}

The paper is structured as follows: The system model, including channel training and data transmission, is described in Section \ref{sec:SystemModel}. The estimation phase is analyzed in Section \ref{sec:estimation}. In Section \ref{sec:datatransmission}, we derive the achievable rates of different linear receivers, along with the optimal training length. We extend our analysis to large-scale MIMO systems in Section \ref{sec:MassiveMIMO}. Finally, conclusions are drawn in Section \ref{sec:Conclusion}.

\textit{Notation:} Upper and lower case boldface letters denote matrices and vectors, respectively. The trace of a matrix is expressed by $\mathrm{tr}\left(\cdot\right)$. The $n \times n$ identity matrix is represented by $\mathbf I_n$. The expectation operation is $\mathbb {E} [\cdot]$, while the matrix determinant is denoted by det$(\cdot)$. The superscripts $(\cdot)^ H$, $(\cdot)^{-1}$ and $(\cdot)^{\dagger}$ stand for Hermitian transposition, matrix inverse and pseudo-inverse, respectively. The Frobenius norm and  spectral norm are denoted by $\left\|\cdot\right\|_F$ and $\left\|\cdot\right\|_2$, respectively. The symbol $\mathcal{CN}\left(\mathbf m, \boldsymbol\Sigma\right)$ denotes a multi-variate circularly-symmetric complex Gaussian (CSCG) distribution with mean $\mathbf m$ and covariance $\boldsymbol\Sigma$. For any matrix $\mathbf H\in\mathbb C^{m\times n}$, $\mathbf h_i$ is the $i$-th column of $\mathbf H$.\vspace{-0.3cm}

\section{Signal and System Models}\label{sec:SystemModel}
We consider a block fading channel with coherence length ${T}$. During each block, the channel is constant, and is a realization of the uncorrelated Rayleigh fading model. Channel realizations between different blocks are assumed to be independent.
\vspace{-0.2cm}
\subsection{System Model With RTRI}
RF impairments  exist inherently in practical wireless communication systems. Due to these impairments, the transmitted signal will be distorted during the transmission processing,  thereby causing a mismatch between the intended signal and what is actually transmitted. Even though compensation schemes are usually adopted to mitigate the effects of these impairments, there is always some amount of residual impairments. In \cite{schenk2008rf, studer2010residual}, the authors have shown that these residual impairments on the transmit side act as additive noise. Furthermore, experimental results in \cite{studer2010residual} revealed that such RTRI behave like zero-mean complex Gaussian noise, but with the important property that their average power is proportional to the average signal power. For sufficient decoupling between different RF chains, such impairments are statistically independent across the antennas. Moreover, impairments during different channel uses are also assumed to be independent.\footnote{Residual impairments refer to the aggregate noise that is left after properly compensating the hardware-impaired transmit signals in each transmission step. Since it is conventionally assumed that the transmitted symbols across different channel uses are independent, it is also reasonable to assume that the residual impairments are also independent over each symbol. The paper [11] provides experimental justification of this assumption. More generally speaking (not restricted to RTRI), the effects of transceiver distortions, such as clipping in power amplifier nonlinearities, quantization errors, can be well-approximated as memoryless functions [16], thus they are also independent over symbols/time.} We now denote the RTRI noise as $\boldsymbol\Delta$. Then, the input-output relationship of a training-based MIMO system with $N_t$ transmit antennas and $N_r$ receive antennas, within a block of $T$ symbols, can be expressed as
\vspace{-0.3cm}
\begin{equation}
\mathbf Y = \sqrt{\frac{\rho}{N_t}}\mathbf H (\mathbf S + \boldsymbol \Delta) + \mathbf V, \label{eq:SysModelAll}
\end{equation}
where ${\mathbf S}\in\mathbb{C}^{N_t\times T}$ is the transmitted signal containing both pilot symbols and data symbols,\footnote{Note that we choose $\mathbf S$ to be full rank, i.e., all $N_t$ antennas are used to transmit signals. Although this is inefficient for the low SNR regime, it is a meaningful scheme when no CSI  at the transmitter (CSIT) is available \cite{Jorswieck2004Transmission_TSP}.} and $\mathbf H\in\mathbb C^{N_r\times N_t}$ is the channel matrix. The receiver noise and the received signal are denoted as $\mathbf V\in\mathbb{C}^{N_r\times T}$ and $\mathbf Y\in\mathbb{C}^{N_r\times T}$, respectively. Each element of $\mathbf H$ and $\mathbf V$ follows a $\mathcal{CN}(0,1)$ distribution. The average SNR at each receive antenna is denoted by $\rho$.  At last, according to the above discussion, we can characterize the RTRI noise $\boldsymbol\Delta$ according to
\vspace{-0.2cm}
\begin{align}
\boldsymbol\delta_{i} \sim \mathcal{CN}\left(\mathbf{0}, \delta^2 \mathbf I_{N_t}  \right), \mathbb E\left[\boldsymbol\delta_{i}\boldsymbol\delta_{j}^H\right] = \mathbf 0\notag\
i, j= 1, 2, \dots, T, i\neq j,
\end{align}
where ${\boldsymbol\delta}_{i}$ denotes the $i$th column of $\boldsymbol\Delta$, and the proportionality parameter $\delta$ characterizes the level of residual impairments in the transmitter. Note that $\delta$ appears in practical applications as the error vector magnitude (EVM) \cite{holma2011LTE}, which is commonly used to measure the quality of RF transceivers. For instance, 3GPP LTE has EVM requirements in the range $\left[0.08,0.175 \right]$ \cite{holma2011LTE}, where smaller EVMs are required to support high rates. The relationship between $\delta$ and EVM is defined as
\begin{equation}
\mathrm{EVM} \triangleq \sqrt{\frac{\mathbb{E}_{\boldsymbol\Delta}\left[ \left\| \boldsymbol\Delta \right\|_F^2 \right]}{\mathbb{E}_{\mathbf S}\left[ \left\| \mathbf S\right\|_F^2 \right]}} = \delta.
\end{equation}
Evidently, when $\delta = 0$, the system model simplifies to the ideal hardware case. The EVM can be measured in practice using well-established experiment setups, e.g. \cite{Agilent8Hints}.
We can now decompose the system model in (\ref{eq:SysModelAll}) into the training and data transmission phases as follows:\vspace{-0.5cm}
\subsubsection{Training Phase}
\begin{align}\label{eq:SysModelTraining}
\mathbf Y_p = \sqrt{\frac{\rho}{N_t}}\mathbf H \left(  \mathbf S_p\! +\! \boldsymbol\Delta_p \right) + \mathbf V_p, ~\mathrm{tr}\left(\mathbf S_p\mathbf S_p^H\right) = N_tT_p,
\end{align}
where $\mathbf S_p\in\mathbb C^{N_t\times T_p}$ is the deterministic matrix of training sequences sent over $T_p$ channel uses, that is known by the receiver, $\mathbf Y_p$ is the $N_r\times T_p$ received matrix. The distortion noise caused by the RTRI is characterized as
\begin{align}
{\boldsymbol\delta_p}_{i}\sim \mathcal{CN}\left(\mathbf 0, \delta^2\mathbf I_{N_t}\right), \mathbb E\left[{\boldsymbol\delta_p}_{i}{\boldsymbol\delta_p}_{j}^H\right] = \mathbf 0,      ~i,j=1,2,\dots,T_p,i\neq j,
\end{align}
where ${\boldsymbol\delta_p}_{i}$ denotes the $i$th column of $\boldsymbol\Delta_p$.
\subsubsection{Data Transmission Phase}
\begin{align}\label{eq:SysModelData}
\mathbf Y_d \!= \!\sqrt{\frac{\rho}{N_t}}\mathbf H \left( \mathbf S_d \!+\! \boldsymbol\Delta_d\right)\! +\! \mathbf V_d,
~\mathbb{E}\Big[\mathrm{tr}\left(\mathbf {S}_d^H\mathbf S_d\right)\Big]\! =\! N_t T_d,
\end{align}
where $\mathbf S_d \in \mathbb{C}^{N_t \times T_d}$ is the matrix of data symbols sent over $T_d$ channel uses, $\mathbf Y_d$ is the $N_r \times T_d$ received signal matrix. The distortion noise caused by RTRI  is characterized as
\begin{align}
{\boldsymbol\delta_d}_{i}\sim \mathcal{CN}\left(\mathbf 0, \delta^2\mathbf I_{N_t}\right), \mathbb E\left[{\boldsymbol\delta_d}_{i}{\boldsymbol\delta_d}_{j}^H\right] = \mathbf 0,~i,j=1,2,\dots,T_d,i\neq j,
\end{align}
where ${\boldsymbol\delta_d}_{i}$ denotes the $i$th column of $\boldsymbol\Delta_d$.

Note that $\mathbf S = \left[\mathbf S_p ~\mathbf S_d\right]$ and $\boldsymbol\Delta = \left[\boldsymbol\Delta_p ~\boldsymbol\Delta_d\right]$, and the conservation of coherence block length yields $T = T_p + T_d$.

\begin{remark}
We only consider the RTRI at the transmitter side since it is more dominant than its counterpart at the receiver. In practical communication systems, the received signal will firstly pass through a low noise amplifier (LNA), such that the post-processed power for the rest of the receiver circuit (e.g. demodulator, mixer) is always within a certain range, no matter how much power is actually captured by the antenna. Moreover, even if we model the receiver residual impairments in the same way as the RTRI, i.e., its variance is proportional to the received power that is received by the antenna, it is still less dominant in some practically interesting cases; for example, low SNR applications and large-scale MIMO with many receive antennas \cite{xinlin2014capacity_ICC}.
\end{remark}
%\begin{equation}
%T = T_p + T_d.\label{eq:conservation}
%\end{equation}
\vspace{-0.3cm}
\section{Channel Estimation}\label{sec:estimation}\vspace{-0.5cm}
In this section, we analyze the impact of RTRI on the training phase. Our objective is to characterize how estimation accuracy is affected by RTRI. Conditioned on the availability of RTRI distortion distribution at the receiver, the current channel realization within each block can be estimated through the received signal $\mathbf Y_p$. The received signal contains both Gaussian distributed terms $\mathbf H\mathbf S_p$ and $\mathbf V_p$ and double Gaussian distributed term $\mathbf H\boldsymbol\Delta_p$; therefore, closed-form MMSE estimator is difficult to determine. Thus, we resort to LMMSE estimator, whose closed-form expression can be derived through tractable manipulations, and which performs better than other non-Bayesian estimators, e.g., least square (LS) estimator.

%Within each block, the current channel realization $\hat{\mathbf H}$ is estimated through the received signal $\mathbf Y_p$, conditioned on that the statistical information of RTRI distortion and receiver white noise are available at the receiver. In particular, we consider an LMMSE estimator, whose closed-form expression can be derived for further analysis, and performs better than other non-Bayesian estimators, e.g., least square (LS) estimator.

%
%The classical results on training-based channel estimation consider Rayleigh fading channels, which experience independent CSCG noise with known statistics \cite{hassibi2003, emil2010est}. However this is not the case herein since the effective distortion noise $\mathbf H\boldsymbol\Delta_p$ depends on the unknown random channel $\mathbf H$.  Thus, it has a $\textit{complex double Gaussian distribution}$ \cite{don2012double}, which makes it difficult to derive the closed-form MMSE estimator. Therefore, we resort to LMMSE estimator, which  admits tractable manipulations, and performs better in this scenario than other non-Bayesian estimators, e.g., least square (LS) estimator.
Given the signal model in \eqref{eq:SysModelTraining}, we can derive the LMMSE estimator by directly extending the classical estimation results \cite{biguesh2006training}, which is given by
\begin{equation}\label{eq:estimator}\vspace{-0.3cm}
\hat{\mathbf H} = \sqrt{\frac{\rho}{N_t}}\mathbf Y_p\bigg(\frac{\rho}{N_t} \mathbf S_p^H{\mathbf S_p} + \left( \delta^2\rho + 1 \right)\mathbf I_{T_p} \bigg)^{-1}\mathbf S_p^H.
\end{equation}
For any given training symbol matrix $\mathbf S_p$, \eqref{eq:estimator} minimizes the mean-squared error (MSE) from the true channel, which is defined as $\mathrm{MSE} \triangleq    \mathrm{tr}\left(\tilde{\mathbf C}\right)$, where $\tilde{\mathbf C}\!\triangleq\!\mathbb E\left[ \tilde{\mathbf H}^H\tilde{\mathbf H}\right]$ represents the estimation error covariance matrix, and $\tilde{\mathbf H} \triangleq \mathbf H - \hat{\mathbf H}$ is the estimation error.
Furthermore, using a similar approach as in \cite{biguesh2006training}, we find that the optimal training sequence which minimizes the MSE should satisfy $\mathbf S_p\mathbf S_p^H = T_p\mathbf I_{N_t}$, and the corresponding MSE reduces to
\begin{equation}\label{eq:MSE_expression}\vspace{-0.3cm}
\mathrm{MSE} = \frac{N_rN_t}{1 + \epsilon}, ~\mathrm{with}~ \epsilon \triangleq \frac{\rho T_p}{N_t(\rho\delta^2+1)}.
\end{equation}
%
%with the mean-squared error (MSE) given by $\mathrm{MSE} \triangleq    \mathrm{tr}\left(\tilde{\mathbf C}\right)$, where $\tilde{\mathbf C}\!\triangleq\!\mathbb E\left[ \tilde{\mathbf H}^H\tilde{\mathbf H}\right]$ represents the estimation error covariance matrix, and $\tilde{\mathbf H} \triangleq \mathbf H - \hat{\mathbf H}$ is the estimation error.
%\begin{IEEEproof}
%Since the rows of $\mathbf Y_p$ are independent and identically distributed (i.i.d.), we can write the LMMSE estimator in the general form $\hat{\mathbf{H}} = \mathbf Y_p \mathbf A$, where $\mathbf A$ should be selected to minimize the MSE. The estimator in (\ref{eq:estimator}) is found by taking the first derivative of the MSE with respect to $\mathbf A$, and equating the result to zero.
%\end{IEEEproof}
%Furthermore, we can minimize the MSE by carefully choosing the optimal training sequence matrix, as shown by the following corollary.
%\begin{corollary}
%The training sequence matrix $\mathbf S_p$ that minimizes the MSE should satisfy
%\begin{equation}\label{eq:Sp}
%\mathbf S_p\mathbf S_p^H = T_p\mathbf I_{N_t}
%\end{equation}
%and the corresponding MSE is given by
%\begin{equation}\label{eq:MSE_expression}
%\mathrm{MSE} = \frac{N_rN_t}{1 + \epsilon}, ~\mathrm{with}~ \epsilon \triangleq \frac{\rho T_p}{N_t(\rho\delta^2+1)}.
%\end{equation}
%\end{corollary}
%\begin{IEEEproof}
%This corollary can be proved by applying the Lagrange multiplier method \cite{bertsekas1999nonlinear} on the MSE, subject to the power constraint $\mathrm{tr}\left(\mathbf S_p^H\mathbf S_p\right) = N_tT_p$.
%\end{IEEEproof}
The resulting estimation error covariance matrix therefore becomes $\tilde{\mathbf C} = \frac{N_r}{1+\epsilon}\mathbf I_{N_t}$. It is straightforward to show that $\tilde{\mathbf H}$ has i.i.d. entries with zero mean. The variance of each element in $\tilde{\mathbf H}$, which is also defined here as the normalized MSE (NMSE), can be expressed as $\sigma_{\tilde{\mathbf H}}^2 = \frac{1}{N_rN_t}\mathrm{tr}\left(\tilde{\mathbf C}\right) = \frac{1}{1+\epsilon}$. By the orthogonality principle of LMMSE estimates \cite{kaybook1}, each element in $\hat{\mathbf H}$ has a variance of $\sigma_{\hat{\mathbf H}}^2 = 1 - \sigma_{\tilde{\mathbf H}}^2 = \frac{\epsilon}{1+\epsilon}$. If we examine at the expression of the NMSE, it is trivial to find that the estimator reaches an error floor at high SNRs, which is quantified by taking $\rho\rightarrow\infty$ as
\begin{equation}\vspace{-0.3cm}
\mathrm{NMSE}_\mathrm{limit} = \frac{1}{1+\frac{T_p}{N_t\delta^2}}.\label{eq:MSE_limit}
\end{equation}
%which is quantified in the next corollary.
%\begin{corollary}\label{coro:mselimit}
%Asymptotically as $\rho\!\rightarrow\!\infty$, the normalized MSE in (\ref{eq:MSE_expression}) approaches the limit
%\begin{equation}
%\mathrm{MSE}_\mathrm{limit} = \frac{1}{1+\frac{T_p}{N_t\delta^2}}.\label{eq:MSE_limit}
%\end{equation}
%\end{corollary}
%\begin{IEEEproof}
%This proof is achieved by making $\rho\rightarrow\infty$  in (\ref{eq:MSE_expression}). \end{IEEEproof}
Obviously, the value of this floor depends on the level of RTRI; in general, large RTRI will cause severe degradation of the channel estimates. We can also see from (\ref{eq:MSE_limit}) that, for a fixed level of RTRI, an increase in the training sequence length $T_p$ decreases the MSE monotonically.

Figure \ref{fig:MSE} shows the NMSE, $\sigma_{\tilde{\mathbf H}}^2$, of a MIMO system with $N_t = N_r = 4$ for different levels of impairments. In this case, we use $T_p\!=\!4$ channel uses to transmit pilot symbols, which is the minimum length required to estimate all channel dimensions. Without the existence of RTRI, increasing the transmit power decreases the NMSE monotonically towards zero. However, in the presence of RTRI, we observe a fundamentally different behavior. Specifically, when the transmit power becomes high, impairments will generate an irreducible error floor, which was explicitly quantified in \eqref{eq:MSE_limit}.\vspace{-0.7cm}
\begin{figure}[h]
\begin{centering}
\includegraphics [height=0.35\textwidth]{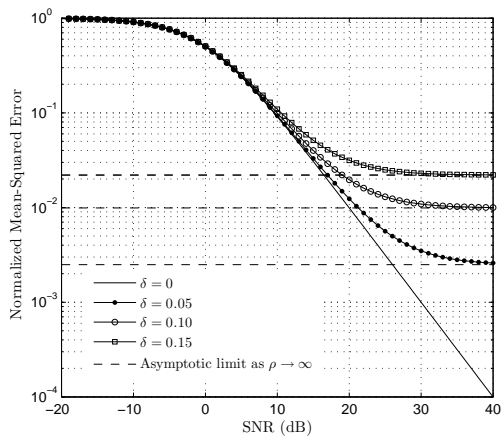}
\caption{Normalized mean-squared error (MSE) for different levels of impairments ($N_t = N_r = 4, T = 100, T_p = 4$).}\label{fig:MSE}
\end{centering}
\end{figure}
\vspace{-1.2cm}
\section{Data Transmission}\label{sec:datatransmission}
In this section, we investigate the data transmission of a training-based MIMO system with spatial multiplexing scheme. We seek to analyze the impact of RTRI on the ergodic achievable rates of three most common linear receivers, namely ZF, MRC and MMSE receivers, and to find the optimal training length that maximizes the corresponding ergodic achievable rates. Specifically, for a given coherence time $T$, we can formulate the following optimization problem:\vspace{-0.3cm}
\begin{align}\label{eq:Optimization1}
&\textrm{maximize}  ~~~R\notag\\\vspace{-0.4cm}
&\textrm{subject to}  ~~~N_t\leq T_p<T,
\end{align}
where $R$ is the ergodic achievable rate parameterized by $T_p$. We derive  SINR distributions for each type of receiver to evaluate $R$. The optimal training length $T_p^{*}$ is found through a line search over all possible $T_p$ values that satisfy the conditions in (\ref{eq:Optimization1}). In the following, we pursue our analysis for exact SNRs, as well as, high and low SNRs.\vspace{-0.7cm}
\subsection{Exact SNR analysis}\label{sec:RateExact}
We start by writing the $N_r\times 1$ received vector through (\ref{eq:SysModelData})\vspace{-0.3cm}
\begin{align}
{{\mathbf y}_d} & = \sqrt{\frac{\rho}{N_t}}{\hat {\mathbf H}}{\mathbf s_d} + \sqrt{\frac{\rho}{N_t}}\left({\tilde {\mathbf H}}{\mathbf s_d} + \mathbf H{\boldsymbol\delta_d}\right) +\mathbf v_d\\
& = \sqrt{\frac{\rho}{N_t}}\hat {\mathbf h}_k{s_d}_k +    \underbrace{ \sqrt{\frac{\rho}{N_t}}\left(\sum_{i = 1, i \neq k}^{N_t}\hat {\mathbf h}_i{s_d}_i + {\tilde {\mathbf H}}{\mathbf s_d} + \mathbf H{\boldsymbol\delta_d}\right) +\mathbf v_d}_{\triangleq{{\mathbf z}_d}_k}, ~k = 1, 2, \dots, N_t
\end{align}
where ${s_d}_k$ is the transmitted data symbol on the $k$-th transmit antenna, and ${\mathbf z_d}_k$ is the total effective noise plus interference on the $k$-th spatial stream. To recover the signal ${s_d}_k$, ${{\mathbf y}_d}$ is weighted by a $1\times N_r$ vector $\hat {\mathbf a}_k^H$, which for ZF, MRC and MMSE linear receivers is respectively given by
\begin{equation}
\hat {\mathbf a}_k^H = \begin{cases}
\hat {\mathbf g}_k^H, & \textrm{for ZF}\\
\hat{\mathbf h}^H_k, & \textrm{for MRC}\\
\hat{\mathbf h}^H_k\left(\mathbf R_{\mathbf z_{dk}}\right)^{-1} & \textrm{for MMSE}
\end{cases}
\end{equation}
where $\hat {\mathbf g}_k$ is the $k$-th column of $\hat {\mathbf G}$, with $\hat {\mathbf G} \triangleq \hat{\mathbf H}^{\dagger} = \hat{\mathbf H}\left(\hat{\mathbf H}^H\hat{\mathbf H}\right)^{-1}$, and $\mathbf R_{{\mathbf z_d}_k} \triangleq \mathbb E\left[{\mathbf z_{dk}}{\mathbf z^H_{dk}} | \hat{\mathbf H}\right]$ is the covariance matrix of ${{{\mathbf z}_d}_k}$, which is explicitly given by\vspace{-0.2cm}
\begin{equation}
\mathbf R_{{\mathbf z_d}_k} = \frac{\rho}{N_t}\sum_{i = 1, i \neq k}^{N_t}\hat {\mathbf h}_i{\hat {\mathbf h}_i}^H + \frac{\rho \delta^2}{N_t}\hat{\mathbf H}\hat{\mathbf H}^H + \left( \frac{\rho+\rho\delta^2 +1+\epsilon}{1+\epsilon} \right)\mathbf I_{N_r}.
\end{equation}
Multiplying the received signal with the weighting vector, and conditioned on that $\hat{\mathbf H}^H$ is known at the receiver, the SINR of ZF, MRC and MMSE receivers on the $k$-th spatial stream are respectively given as follows
\begin{equation}\vspace{-0.5cm}
\gamma_{k,\mathrm{ZF}} = \frac{1}{\delta^2 + c_0\left[\left(\bar {\mathbf H}^H\bar {\mathbf H}\right)^{-1}\right]_{k,k}},\label{eq:SINR_ZF}
\end{equation}
\begin{equation}
\gamma_{k,\mathrm{MRC}} = \frac{ \lvert {\bar {\mathbf h}_k}^H{\bar {\mathbf h}_k} \rvert^2 }{\sum_{i=1,i\neq k}^{N_t} \lvert {\bar {\mathbf h}_k}^H{\bar {\mathbf h}_i} \rvert^2 +\delta^2\sum_{i=1}^{N_t} \lvert {\bar {\mathbf h}_k}^H{\bar {\mathbf h}_i} \rvert^2 + c_0\lvert {\bar {\mathbf h}_k}^H{\bar {\mathbf h}_k}\rvert},\label{eq:SINR_MRC}
\end{equation}
\begin{equation}
\gamma_{k,\mathrm{MMSE}} = {\bar {\mathbf h}_k}^H\left( \sum_{i=1,i\neq k}^{N_t}{\bar {\mathbf h}_i}{\bar {\mathbf h}_i}^H + \delta^2\sum_{i=1}^{N_t}{\bar {\mathbf h}_i}{\bar {\mathbf h}_i}^H +c_0\mathbf I_{N_r}   \right)^{-1}\bar {\mathbf h}_k,\label{eq:SINR_MMSE}
\end{equation}
where we have defined $c_0 \triangleq \frac{N_t\left(\rho+\rho\delta^2+1+\epsilon\right)}{\rho \epsilon}$ for notational convenience, and $\bar {\mathbf h}_i$ denotes the $i$th column of $\bar {\mathbf H}$. We also define $\bar{\mathbf H} \triangleq \frac{1}{\sigma^2_{\hat{\mathbf H}}}\hat{\mathbf H}$, which has independent and approximately $\mathcal{CN}(0,1)$ entires.\footnote{Due to the multiplicative term $\mathbf H\boldsymbol\Delta_p$, which appears in the estimated channel $\hat{\mathbf H}$, $\bar{\mathbf H}$ is not exactly Gaussian distributed. However, we can verify easily through simulations that this it can be tightly approximated as Gaussian distributed even for  high levels of RTRI.} Note that for ZF receivers, we require $N_r\geq N_t$, whilst we consider arbitrary antenna configurations for MRC and MMSE receivers. We now give the SINR cumulative distribution functions (CDFs) for these receivers in the following proposition, which, to best of our knowledge, are new.
\begin{proposition}\label{prop:cdf}
For ZF, MRC and MMSE linear receivers, the CDF of the output SINR of an arbitrary spatial stream can be tightly approximated as\footnote{Although our analytical results are based on the approximation mentioned in Footnote 3, we henceforth use equality signs without significant notational abuse. Note that our simulation results indicate that the rate approximations remain remarkably tight for a plethora of system parameters.}
\begin{align}\label{eq:cdf_ZF}
F_{\Gamma_{\mathrm{ZF}}}(\gamma) &= \begin{cases}
1-e^{-\frac{c_0\gamma}{1-\delta^2\gamma}}\sum\limits_{k=0}^{N_r-N_t}\frac{\left({\frac{c_0\gamma}{1-\delta^2\gamma}}\right)^k}{k!}, &0\leq\gamma<\frac{1}{\delta^2},\\
1, &\gamma \geq \frac{1}{\delta^2},\\
\end{cases}\\
\label{eq:cdf_MRC}
F_{\Gamma_{\mathrm{MRC}}}(\gamma) &= \begin{cases}
1-\frac{e^{-\frac{c_0\gamma}{1-\delta^2\gamma}} }{\left(\frac{1+\gamma}{1-\delta^2\gamma}\right)^{N_t-1}}\sum\limits_{k=0}^{N_r-1}\sum\limits_{p=0}^{k}\frac{\alpha_{p,k}\left( \frac{c_0\gamma}{1-\delta^2\gamma} \right)^k}{\left( \frac{1+\gamma}{1-\delta^2\gamma} \right)^p}, &0\leq\gamma<\frac{1}{\delta^2},\\
1, &\gamma \geq \frac{1}{\delta^2},\\
\end{cases}\\
\label{eq:cdf_MMSE}
F_{\Gamma_{\mathrm{MMSE}}}(\gamma) &= \begin{cases}
1- \frac{e^{-\frac{c_0\gamma}{1-\delta^2\gamma}}}{\left(\frac{1+\gamma}{1-\delta^2\gamma}\right)^{N_t-1}}\sum\limits_{k=0}^{N_r-1}\beta_k\left(\frac{c_0\gamma}{1-\delta^2\gamma}\right)^{k}, &0\leq\gamma<\frac{1}{\delta^2},\\
1, &\gamma \geq \frac{1}{\delta^2},
\end{cases}
\end{align}
where $\alpha_{p,k} \triangleq \frac{{N_t+p-2\choose p} \left(\frac{1+\delta^2}{c_0}\right)^p }{(k-p)!}$, and $\beta_k \triangleq \sum\limits_{p = \max(0,k-N_t+1)}^{k} \frac{{N_t-1\choose k-p}\left(\frac{c_0}{1+\delta^2}\right)^{p-k}}{p!}$.
\end{proposition}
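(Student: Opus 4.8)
My plan is to exploit a common algebraic structure hidden in the three SINR expressions. Throughout I treat $\bar{\mathbf H}$ as having exactly i.i.d. $\mathcal{CN}(0,1)$ entries, which is the Gaussian approximation for $\bar{\mathbf H}$ noted above. The first step is to rewrite each of \eqref{eq:SINR_ZF}--\eqref{eq:SINR_MMSE} in the single form $\gamma_k = \frac{w}{1+\delta^2 w}$, where $w\ge 0$ is a ``residual-impairment-free'' SINR. For ZF this is immediate with $w_{\mathrm{ZF}}=\frac{1}{c_0[(\bar{\mathbf H}^H\bar{\mathbf H})^{-1}]_{k,k}}$. For MRC, dividing \eqref{eq:SINR_MRC} through by $\bar{\mathbf h}_k^H\bar{\mathbf h}_k$ and isolating the $\delta^2$ term identifies $w_{\mathrm{MRC}} = \frac{\bar{\mathbf h}_k^H\bar{\mathbf h}_k}{(1+\delta^2)\sum_{i\neq k}|\bar{\mathbf h}_k^H\bar{\mathbf h}_i|^2/(\bar{\mathbf h}_k^H\bar{\mathbf h}_k)+c_0}$. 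For MMSE, I absorb the $i=k$ term of $\delta^2\sum_i\bar{\mathbf h}_i\bar{\mathbf h}_i^H$ as a rank-one perturbation and apply the Sherman--Morrison identity; this collapses \eqref{eq:SINR_MMSE} to $\gamma_{k,\mathrm{MMSE}}=\frac{t}{1+\delta^2 t}$ with $t=\bar{\mathbf h}_k^H\big((1+\delta^2)\sum_{i\neq k}\bar{\mathbf h}_i\bar{\mathbf h}_i^H+c_0\mathbf I_{N_r}\big)^{-1}\bar{\mathbf h}_k$. Since $w\mapsto w/(1+\delta^2 w)$ is increasing and bounded above by $1/\delta^2$, we get $F_{\Gamma}(\gamma)=1$ for $\gamma\ge 1/\delta^2$, while for $0\le\gamma<1/\delta^2$ the event $\{\gamma_k\le\gamma\}$ equals $\{w\le \gamma/(1-\delta^2\gamma)\}$. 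This one substitution $\gamma\mapsto\gamma/(1-\delta^2\gamma)$ accounts for the recurring argument $c_0\gamma/(1-\delta^2\gamma)$ and the factor $(1+\gamma)/(1-\delta^2\gamma)$ appearing in all three CDFs, and reduces everything to finding the CDF of $w$.

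For ZF I would use the classical Wishart fact that $1/[(\bar{\mathbf H}^H\bar{\mathbf H})^{-1}]_{k,k}$ is the squared length of the projection of $\bar{\mathbf h}_k$ onto the $(N_r-N_t+1)$-dimensional orthogonal complement of the other columns, hence $\mathrm{Gamma}(N_r-N_t+1,1)$. Then $w_{\mathrm{ZF}}\sim\mathrm{Gamma}(N_r-N_t+1,1/c_0)$, and plugging its regularised-incomplete-gamma CDF into $\gamma/(1-\delta^2\gamma)$ gives \eqref{eq:cdf_ZF}.

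For MRC the crucial step is a conditioning argument: given $\bar{\mathbf h}_k$, each cross term $\bar{\mathbf h}_k^H\bar{\mathbf h}_i$ ($i\neq k$) is $\mathcal{CN}(0,\|\bar{\mathbf h}_k\|^2)$, so $U\triangleq\sum_{i\neq k}|\bar{\mathbf h}_k^H\bar{\mathbf h}_i|^2/\|\bar{\mathbf h}_k\|^2\sim\mathrm{Gamma}(N_t-1,1)$ is independent of $S\triangleq\|\bar{\mathbf h}_k\|^2\sim\mathrm{Gamma}(N_r,1)$. With $w_{\mathrm{MRC}}=S/((1+\delta^2)U+c_0)$, its CDF is $\int_0^\infty F_S\big((1+\delta^2)xu+c_0x\big)f_U(u)\,du$. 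Inserting the finite-sum forms of $F_S$ and $f_U$, expanding by the binomial theorem, and evaluating $\int_0^\infty u^{m}e^{-(1+(1+\delta^2)x)u}\,du$ as gamma functions produces the double sum; a short reindexing matches $\alpha_{p,k}$ and yields \eqref{eq:cdf_MRC}.

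The MMSE case carries the real difficulty and is the main obstacle. After the Sherman--Morrison reduction, I must find the CDF of $t=\bar{\mathbf h}_k^H\mathbf B^{-1}\bar{\mathbf h}_k$ with $\mathbf B=(1+\delta^2)\sum_{i\neq k}\bar{\mathbf h}_i\bar{\mathbf h}_i^H+c_0\mathbf I_{N_r}$, that is, the classical MMSE output-SINR law for an i.i.d. Rayleigh interference-plus-noise channel. My route is to condition on the interference Gram matrix: by unitary invariance of $\bar{\mathbf h}_k$ and the eigendecomposition of $\sum_{i\neq k}\bar{\mathbf h}_i\bar{\mathbf h}_i^H$, write $t=\sum_n |z_n|^2/((1+\delta^2)\lambda_n+c_0)$ with $z_n$ i.i.d. $\mathcal{CN}(0,1)$; for $N_r\ge N_t$ the $N_r-N_t+1$ zero eigenvalues contribute a $\mathrm{Gamma}(N_r-N_t+1,1/c_0)$ term, and the remaining $N_t-1$ nonzero eigenvalues follow a Laguerre (Wishart) ensemble. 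The conditional CDF is a weighted sum of exponentials handled by partial fractions, after which I must average over the joint eigenvalue density against the squared Vandermonde weight. This averaging and the ensuing collection of terms is the technically demanding part, and it is precisely what condenses into the single-sum coefficients $\beta_k$. As a consistency check I would specialise to $N_t=1$, where $\mathbf B=c_0\mathbf I_{N_r}$, $t=\|\bar{\mathbf h}_k\|^2/c_0$, $\beta_k=1/k!$, and \eqref{eq:cdf_MMSE} reduces to the same incomplete-gamma CDF as ZF at $N_t=1$.
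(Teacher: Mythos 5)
Your proposal is correct and, for the ZF and MRC cases, follows essentially the same path as the paper: the common reduction $\gamma = w/(1+\delta^2 w)$ with the induced substitution $\gamma \mapsto \gamma/(1-\delta^2\gamma)$ is exactly how the paper maps each impaired SINR onto an impairment-free statistic (its appendix writes $\gamma_{\mathrm{ZF}} = X/(\delta^2 X + c_0)$ and $\gamma_{\mathrm{MRC}} = Z/\bigl((1+\delta^2)Y + \delta^2 Z + c_0\bigr)$, evaluates $F_X$ via the chi-square law you invoke, and computes the very same $\int_0^\infty F_Z(\cdot)\,f_Y(y)\,dy$ integral by binomial expansion and a gamma-function identity). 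Your Sherman--Morrison collapse of \eqref{eq:SINR_MMSE} to $t/(1+\delta^2 t)$ is likewise the paper's step \eqref{eq:MMSE_SINR_trasform1}, obtained there through its matrix inversion lemma. The one genuine divergence is how the law of the impairment-free MMSE statistic is obtained: the paper simply cites the known closed-form CDF of the classical MMSE output SINR for i.i.d.\ Rayleigh interference-plus-noise channels \cite{Gao1998MMSE}, \cite{Mckay2009LinearReceiver}, and then applies the same substitution plus a resummation into the $\beta_k$ coefficients, whereas you propose re-deriving that law from first principles by conditioning on the interference Gram matrix, exploiting unitary invariance, partial fractions, and averaging over the Wishart/Laguerre eigenvalue density. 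That route is legitimate (it is essentially how the cited classical result was established in the first place), and your $N_t=1$ consistency check ($\beta_k = 1/k!$, reducing \eqref{eq:cdf_MMSE} to the ZF form) is correct; but note that you have only sketched the Vandermonde-weighted eigenvalue averaging, which is by far the hardest computation in your plan and is precisely what the paper sidesteps by citation. As written, your MMSE leg is a plan rather than a completed proof; the pragmatic way to close it is to invoke the known CDF exactly as the paper does, after which your transformation argument delivers \eqref{eq:cdf_MMSE} immediately.
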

\begin{IEEEproof}
See Appendix \ref{appendix:zf_cdf}.\end{IEEEproof}

The above closed-form expressions for the SINR CDF distributions incorporate the statistical impact of RTRI, and most importantly, are valid for any finite number of transmit and receive antennas. Note that for $\delta = 0$, the above SINR expressions reduce to the ones in \cite{Mckay2009LinearReceiver} with ideal hardware assumptions.
Mathematically speaking, (\ref{eq:cdf_ZF})-(\ref{eq:cdf_MMSE}) are equivalent to the outage probabilities,\footnote{The outage probability per stream is always the lowest when transmitting only one stream, however, it renders a lower average throughput. Thus, in the following numerical analysis, we use all $N_t$ antennas for capacity-approaching transmission. Note that, for the case which only $N_t'<N_t$ antennas are used for transmission, we can evaluate the outage probability by substituting $N_t = N_t'$ into Proposition \ref{prop:cdf}. } since
\begin{equation}
P_\mathrm{out}(x) = \mathrm{Pr}\{\gamma\leq x\} = F_\Gamma(x).
\end{equation}
Note that Proposition \ref{prop:cdf} showcases the important fact that the outage probability is always 1 for $\gamma\geq \frac{1}{\delta^2}$. This implies that an SINR wall exists, which cannot be crossed by simply increasing the transmit power. The value of this wall is inversely proportional to the square of the level of impairments. Note that similar conclusions were also drawn in \cite{Michalis2013Dhop_TCOM} in the context of dual-hop relaying.

We now turn our attention to the ergodic achievable rates, and invoke the following expression of the ergodic achievable rate \cite[Lemma 1]{Mckay2009LinearReceiver}:\vspace{-0.3cm}
\begin{align}\label{eq:cdf_rate}
R &=  \frac{T_d}{T}\mathbb E\left[ N_t\log_2\left(1+\gamma\right)\right]\notag\\
&=\frac{T_dN_t}{\ln(2)T}\int_0^{\infty}\frac{1-F_{\Gamma}(\gamma)}{1+\gamma}d\gamma,
\end{align}
where $\gamma$ is found in (\ref{eq:SINR_ZF}), (\ref{eq:SINR_MRC}) and (\ref{eq:SINR_MMSE}) for ZF, MRC and MMSE receivers, respectively, and the function $F_{\Gamma}(\gamma)$ represents the CDF of $\gamma$. Specifically, the corresponding ergodic achievable rates are given in the following theorem.
\begin{theorem}\label{theorem:anal_rates} For training-based MIMO systems with RTRI, the ergodic achievable rates of ZF, MRC and MMSE receivers can be tightly approximated as
\begin{equation}
\!\!\!\!R_\mathrm{ZF} =
\frac{T_dN_t}{\ln(2)T}\sum_{k=0}^{N_r-N_t}c_0^k\left(e^{\frac{c_0}{\delta^2+1}}E_{k+1}\left(\frac{c_0}{\delta^2+1}\right)-e^{\frac{c_0}{\delta^2}}E_{k+1}\left(\frac{c_0}{\delta^2}\right)\right),\label{eq:anal_ZF_rate}
\end{equation}
\begin{align}
\!\!\!\!R_\mathrm{MRC} &= \frac{T_dN_t}{\ln(2)T}\sum_{k=0}^{N_r-1}\sum_{p=0}^{k}(-1)^{p+N_t}\alpha_{p,k}\delta^{2(p+N_t-1)}\Gamma(k+1)\notag\\
&\>\>\>\>\times\!\left(\!\!\Psi\!\left(1,-k\!+\!1;\!\frac{c_0}{\delta^2}\right) \!-\! \sum_{i=1}^{p+N_t}\!\!\left(\!-\frac{c_0}{\delta^2(\delta^2\!\!+\!\!1)}\!\right)^{p+N_t-i}\!\!\!\!\!\!\Psi\!\left(p\!+\!N_t\!-\!i\!+\!1,p\!+\!N_t\!-\!k\!-\!i\!+\!1;\frac{c_0}{\delta^2\!\!+\!\!1}\right)\!\!\right)\!,\label{eq:anal_MRC_rate}\\\vspace{-0.2cm}
\!\!\!\!R_\mathrm{MMSE} &= \frac{T_dN_t}{\ln(2)T}\sum_{k=0}^{N_r-1}(-1)^{N_t}\beta_k\delta^{2(N_t-1)}\Gamma(k+1)\notag\\
&\>\>\>\>\times\!\left(\!\Psi\!\left(1,-k\!+\!1;\frac{c_0}{\delta^2}\right)\! -\! \sum_{i=1}^{N_t}\left(-\frac{c_0}{\delta^2(\delta^2\!\!+\!\!1)}\right)^{N_t-i}\!\!\!\Psi\left(N_t\!-\!i\!+\!1,-k\!+\!N_t\!-\!i\!+\!1;\frac{c_0}{\delta^2\!+\!1}\right)\!\!\right),  \label{eq:anal_MMSE_rate}
\end{align}
where $E_n(z)$ and $\Gamma(n)$ are the exponential integral function \cite[Eq. (8.211.1)]{gradshteyn2007a} and the Gamma function \cite[Eq. (8.310.1)]{gradshteyn2007a}, respectively, while $\Psi(a,b;z)$ is the regularized hypergeometric function \cite[Eq. (9.210.2)]{gradshteyn2007a}.
\end{theorem}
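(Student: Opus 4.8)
The plan is to evaluate the single integral in \eqref{eq:cdf_rate} separately for each receiver by inserting the survival functions $1-F_\Gamma(\gamma)$ from Proposition \ref{prop:cdf}. The first observation is that, since $F_\Gamma(\gamma)=1$ for $\gamma\geq 1/\delta^2$, the integrand vanishes there, so the effective range of integration is only $[0,1/\delta^2)$. The key device is then the change of variable $u=\frac{\gamma}{1-\delta^2\gamma}$, which maps $[0,1/\delta^2)$ bijectively onto $[0,\infty)$ and simultaneously linearizes the exponent, $\frac{c_0\gamma}{1-\delta^2\gamma}=c_0 u$, turns the recurring ratio into an affine function, $\frac{1+\gamma}{1-\delta^2\gamma}=1+(1+\delta^2)u$, and gives $1+\gamma=\frac{1+(1+\delta^2)u}{1+\delta^2 u}$ together with $d\gamma=\frac{du}{(1+\delta^2 u)^2}$. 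After this substitution every term in each rate reduces to an integral of the canonical form $\int_0^\infty \frac{u^k e^{-c_0 u}}{(1+(1+\delta^2)u)^{m}(1+\delta^2 u)}\,du$ over the full half-line, which is exactly the shape handled by the standard Laplace-type table integrals.

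For ZF the exponent is $m=1$, so each summand collapses to $\int_0^\infty \frac{u^k e^{-c_0 u}}{(1+(1+\delta^2)u)(1+\delta^2 u)}\,du$. First I would split the two-pole denominator by partial fractions, $\frac{1}{(1+(1+\delta^2)u)(1+\delta^2 u)}=\frac{1}{u+\frac{1}{1+\delta^2}}-\frac{1}{u+\frac{1}{\delta^2}}$, and then apply the elementary identity $\int_0^\infty \frac{u^k e^{-\mu u}}{u+\beta}\,du\propto e^{\mu\beta}E_{k+1}(\mu\beta)$ (e.g.\ \cite[Eq.\ (3.383.10)]{gradshteyn2007a} together with the relation $\Gamma(-k,z)=z^{-k}E_{k+1}(z)$). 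Evaluating the two resulting terms at $\beta=\frac{1}{1+\delta^2}$ and $\beta=\frac{1}{\delta^2}$ produces the pair $e^{c_0/(\delta^2+1)}E_{k+1}(\cdot)-e^{c_0/\delta^2}E_{k+1}(\cdot)$, and summing over $k$ yields \eqref{eq:anal_ZF_rate}.

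For MRC and MMSE the only structural change is that the factor $\big(\frac{1+\gamma}{1-\delta^2\gamma}\big)^{-(N_t-1+p)}$ raises the multiplicity of one pole to $N_t+p$ (with $p=0$ for MMSE), so each summand becomes $\int_0^\infty \frac{u^k e^{-c_0 u}}{(1+(1+\delta^2)u)^{N_t+p}(1+\delta^2 u)}\,du$. I would again decompose this by partial fractions into a simple pole at $u=-1/\delta^2$ and an order-$(N_t+p)$ pole at $u=-1/(1+\delta^2)$, and evaluate each resulting integral with the Tricomi identity $\int_0^\infty x^{\nu-1}e^{-\mu x}(x+\beta)^{-\sigma}\,dx=\beta^{\nu-\sigma}\Gamma(\nu)\,\Psi(\nu,\nu-\sigma+1;\mu\beta)$. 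The residue at the simple pole contributes the lone $\Psi(1,-k+1;c_0/\delta^2)$ term, while the residues of the high-order pole generate the inner sum over $i$; a Kummer transformation $\Psi(a,b;z)=z^{1-b}\Psi(a-b+1,2-b;z)$ is then used to cast the arguments into the form displayed in \eqref{eq:anal_MRC_rate}--\eqref{eq:anal_MMSE_rate}.

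The routine parts are the change of variable and the interchange of the (finite) summations with the integral. The hard part will be the bookkeeping in the MRC/MMSE case: extracting the coefficients of the order-$(N_t+p)$ partial-fraction expansion at $u=-1/(1+\delta^2)$, which is what produces the alternating signs $(-1)^{p+N_t}$ and the powers $\delta^{2(p+N_t-1)}$, and then tracking all $\Gamma$, $c_0$, and $\delta^2$ factors through the Tricomi identity and the Kummer transformation so that the arguments and indices of every $\Psi$ match exactly. Verifying the $\delta\to0$ limit against the ideal-hardware expressions of \cite{Mckay2009LinearReceiver} provides a useful consistency check along the way.
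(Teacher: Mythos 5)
Your proposal is correct and follows essentially the same route as the paper's own proof: restrict the integral in \eqref{eq:cdf_rate} to $[0,1/\delta^2)$, linearize via the substitution $\frac{c_0\gamma}{1-\delta^2\gamma}\to x$, split the two simple poles by partial fractions for ZF and evaluate via \cite[Eq.~(3.383.10)]{gradshteyn2007a} with $E_n(z)=z^{n-1}\Gamma(1-n,z)$, and for MRC/MMSE decompose the simple-pole/order-$(N_t+p)$-pole denominator exactly as in the paper's identity $\frac{1}{(x+a)(x+b)^n}=\frac{(b-a)^{-n}}{x+a}-\sum_{i=1}^{n}\frac{(b-a)^{-i}}{(x+b)^{n-i+1}}$ before invoking the Tricomi-type integral \cite[Eq.~(2.3.6.9)]{Prudnikov1986integrals}. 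Your explicit use of the Kummer-type transformation $\Psi(a,b;z)=z^{1-b}\Psi(a-b+1,2-b;z)$ is precisely the step the paper buries under ``some algebraic manipulations,'' so the two arguments coincide in all essentials.
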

\begin{proof}
See Appendix \ref{appendix:anal_rates}.
\end{proof}
These ergodic achievable rates are valid for arbitrary antenna configurations, and can be efficiently evaluated and easily implemented. Unfortunately, they offer limited physical insights, and we, therefore, now elaborate on the low and high SNR regimes.
\subsection{Low SNR analysis, $\rho\rightarrow 0$ }\label{sec:RateLow}
For some practical applications (e.g. battery-limited users or sensors), low SNR is of particular interest. Specifically, for ZF, MRC, and MMSE receivers, we have the following results.
\subsubsection{ZF receivers}
As $\rho\rightarrow 0$, the ZF SINR becomes\vspace{-0.3cm}
\begin{equation}
\gamma_{\mathrm{ZF}}^{\rho\rightarrow 0} = \frac{T_p\rho^2}{N_t^2\left[\left(\bar {\mathbf H}^H\bar {\mathbf H}\right)^{-1}\right]_{k,k}}.
\end{equation}
Consequently, we can approximate the ergodic achievable rate as below\vspace{-0.3cm}
\begin{align}
R_{\mathrm{ZF}}^{\rho\rightarrow 0} &= \mathbb E\left[\frac{T_dN_t}{T}\log_2\left( 1+\frac{T_p\rho^2}{N_t^2\left[\left(\bar {\mathbf H}^H\bar {\mathbf H}\right)^{-1}\right]_{k,k}}\right) \right]\notag\\
&\overset{(a)}\approx\frac{T_p(T-T_p)\rho^2}{\ln(2)TN_t}\mathbb E\left[ \frac{1}{\left[\left(\bar {\mathbf H}^H\bar {\mathbf H}\right)^{-1}\right]_{k,k}}\right]\notag\\
&\overset{(b)}=\frac{T_p(T-T_p)(N_r-N_t+1)}{\ln(2)TN_t}\rho^2,\label{eq:ZFLowRate}
\end{align}
where $(a)$ is obtained by using truncated Taylor's expansion, and $(b)$ is obtained by noticing that $1\slash\left[\left(\bar {\mathbf H}^H\bar {\mathbf H}\right)^{-1}\right]_{k,k}$ follows a chi-squared distribution with $2(N_r-N_t+1)$ degrees of freedom and is scaled by $1\slash 2$ \cite{winters1994diversity}.

\subsubsection{MRC receivers}
For very low transmit power, we can remove negligible terms from (\ref{eq:SINR_MRC}), and arrive at the following MRC SINR expression\vspace{-0.3cm}
\begin{equation}
\gamma_{\mathrm{MRC}}^{\rho\rightarrow 0} = \frac{T_p\rho^2{\lvert {\bar {\mathbf h}_k}^H{\bar {\mathbf h}_k} \rvert}}{N_t^2}.
\end{equation}
Through similar mathematical manipulations, we get the achievable rate of MRC receivers at low SNRs as
\begin{equation}
R_{\mathrm{MRC}}^{\rho\rightarrow 0} \approx\frac{T_p(T-T_p)\rho^2}{\ln(2)TN_t}\mathbb E\left[ \lvert {\bar {\mathbf h}_k}^H{\bar {\mathbf h}_k} \rvert\right] \overset{(a)}=\frac{T_p(T-T_p)N_r}{\ln(2)TN_t}\rho^2,\label{eq:MRCLowRate}
\end{equation}
where $(a)$ is achieved by using the fact that $\lvert {\bar {\mathbf h}_k}^H{\bar {\mathbf h}_k} \rvert$ is a complex chi-square distributed random variable with $2N_r$ degrees of freedom.

\subsubsection{MMSE receivers}

For MMSE receivers, we first rewrite (\ref{eq:SINR_MMSE}) as
\begin{align}
\gamma_{\mathrm{MMSE}} &\overset{(a)}= \frac{{\bar {\mathbf h}_k}^H\left(\sum_{i=1,i\neq k}^{N_t}{\bar {\mathbf h}_i}{\bar {\mathbf h}_i}^H + \frac{c_0}{1+\delta^2}  \mathbf I_{N_r}   \right)^{-1}{\bar {\mathbf h}_k}}{1+\delta^2+\delta^2{\bar {\mathbf h}_k}^H\left(\sum_{i=1,i\neq k}^{N_t}{\bar {\mathbf h}_i}{\bar {\mathbf h}_i}^H + \frac{c_0}{1+\delta^2}\mathbf I_{N_r}   \right)^{-1}{\bar {\mathbf h}_k} }\label{eq:MMSE_SINR_trasform1}\\
&\overset{(b)}=\frac{1-\left[ \left(\mathbf I_{N_t} + \frac{1+\delta^2}{c_0}\bar{\mathbf H}^H\bar{\mathbf H}  \right)^{-1}  \right]_{k,k}}{\delta^2 + \left[ \left(\mathbf I_{N_t} + \frac{1+\delta^2}{c_0}\bar{\mathbf H}^H\bar{\mathbf H}  \right)^{-1}  \right]_{k,k}}\label{eq:MMSE_SINR_trasform2},
\end{align}
where $(a)$ follows by applying Lemma \ref{lemma:MatrixInv} in Appendix A , while $(b)$ is obtained by noticing that the quadratic form on the numerator of (\ref{eq:MMSE_SINR_trasform1}) is a classic MMSE SINR expression with a modified noise power, hence it equals {$\frac{1}{\left[ \left(\mathbf I_{N_t} + \frac{1+\delta^2}{c_0}\bar{\mathbf H}^H\bar{\mathbf H}  \right)^{-1}  \right]_{k,k}} - 1$ }\cite{Ping2006MMSE}.

As $\rho\rightarrow 0$, we can easily find that the SINR and the ergodic achievable rate approach the following limits, respectively
\begin{align}
\gamma_{\mathrm{MMSE}}^{\rho\rightarrow 0} &= \frac{T_p\rho^2{\lvert {\bar {\mathbf h}_k}^H{\bar {\mathbf h}_k} \rvert}}{N_t^2},\\
R_{\mathrm{MMSE}}^{\rho\rightarrow 0} &=\frac{T_p(T-T_p)N_r}{\ln(2)TN_t}\rho^2.\label{eq:MMSELowRate}
\end{align}
Comparing the ergodic achievable rates of different linear receivers in (\ref{eq:ZFLowRate}), (\ref{eq:MRCLowRate}) and (\ref{eq:MMSELowRate}), we find that the achievable rates have no dependence on $\delta^2$, i.e., the impact of RTRI has vanished in the low transmit power regime. Moreover,  it can also be easily conjectured that the optimal training length always equals half of the coherence length, i.e., $T_p^* = T\slash 2$, such that as much as half of the coherence block should be devoted to channel training, which is in line with the result in \cite{hassibi2003}. More importantly, to maintain a certain level of performance, we can increase the number of receive antennas, while reducing the transmit power as fast as $\sqrt{N_r}$ and $\sqrt{N_r-N_t+1}$, for MRC (or MMSE) and ZF receivers, respectively. We can also achieve this power scaling by using less transmit antennas. Finally, as one would expect, MMSE and MRC receivers are equally good in the low SNR regime, while ZF receiver performs badly.

\subsection{High SNR analysis, $\rho\rightarrow\infty$}\label{sec:RateHigh}
From the expressions in (\ref{eq:anal_ZF_rate})-(\ref{eq:anal_MMSE_rate}), we can see that for a certain level of RTRI and system setup, the ergodic achievable rates only depend on the transmit power via the term $c_0$. As $\rho\rightarrow\infty$, $c_0\rightarrow\bar{c}_0 \triangleq \frac{\delta^2(1+\delta^2)N_t^2}{T_p}$. Therefore, the ergodic achievable rates of these linear receivers saturate in the high SNR regime. We can easily quantify these rate limits by substituting $c_0 = \bar{c}_0$ back into the corresponding achievable rate expressions in (\ref{eq:anal_ZF_rate})-(\ref{eq:anal_MMSE_rate}).

%\begin{align}
%\gamma_{\mathrm{ZF}}^{\rho\rightarrow\infty} &= \frac{1}{\delta^2\left(1 + \frac{N_t^2}{T_p}\left[\left(\bar {\mathbf H}^H\bar {\mathbf H}\right)^{-1}\right]_{k,k}\right)},\\
%\gamma_{\mathrm{MRC}}^{\rho\rightarrow\infty} &= \frac{\lvert {\bar {\mathbf h}_k}^H{\bar {\mathbf h}_k} \rvert^2}{(1+\delta^2)\sum_{i=1,i\neq k}^{N_t} \lvert {\bar {\mathbf h}_k}^H{\bar {\mathbf h}_i} \rvert^2 + \delta^2\left(\lvert {\bar {\mathbf h}_k}^H{\bar {\mathbf h}_k} \rvert^2 + \frac{N_t^2}{T_p}\lvert {\bar {\mathbf h}_k}^H{\bar {\mathbf h}_k} \rvert\right)},\\
%\gamma_{\mathrm{MMSE}}^{\rho\rightarrow\infty} &= \frac{1}{\delta^2\left(1 + \frac{N_t^2}{(1+\delta^2)T_p}\left[\left(\bar {\mathbf H}^H\bar {\mathbf H}\right)^{-1}\right]_{k,k}\right)},
%\end{align}
%where we have neglected high order terms, such as those scaling with $\delta^4$.
%
%To evaluate accurate achievable rates at high SNR regime, we still need the help of the SINR distributions in Proposition \ref{prop:cdf} and the ergodic rate formula in (\ref{eq:cdf_rate}). The corresponding optimal training lengths $T_p^*$ can be found in a similar way. Note that one can lower and upper bound the ergodic achievable rates as in \cite{hien2013energyefficiency, emil2013est}, however, this may produce inaccurate $T_p^*$ values during the optimization, which contradicts with our major objective.

\subsection{Numerical Illustrations}
In Fig. \ref{fig:CDFs}, we plot the outage probabilities of linear receivers for different values of the SINR threshold. We consider two antenna configurations, i.e., $N_t = N_r = 5$ and $N_t = 5, N_r = 30$, at $\rho = 30$dB. We observe that for a certain SINR threshold, the outage probability of a hardware-impaired system is systematically higher than that of an ideal-hardware system. More importantly, RTRI create a SINR wall, which cannot be crossed by increasing the transmit power. On the other hand, for the ideal hardware case, the outage probability converges smoothly to 1. However, increasing the number of receive antennas can make the outage probability approach the limit. When comparing these receivers, we also find that MRC is much more resilient to hardware impairments than ZF and MMSE receivers, while ZF receivers are the most sensitive to hardware impairments. For instance, for $N_r = N_t = 5$, the outage probability $10^{-2}$ of ZF receivers exhibits a power penalty of $11$dB, whilst for MMSE and MRC receivers this value is $4$dB and $0$dB, respectively.
\begin{figure}[h]
\begin{centering}
\includegraphics [height=0.5\textwidth]{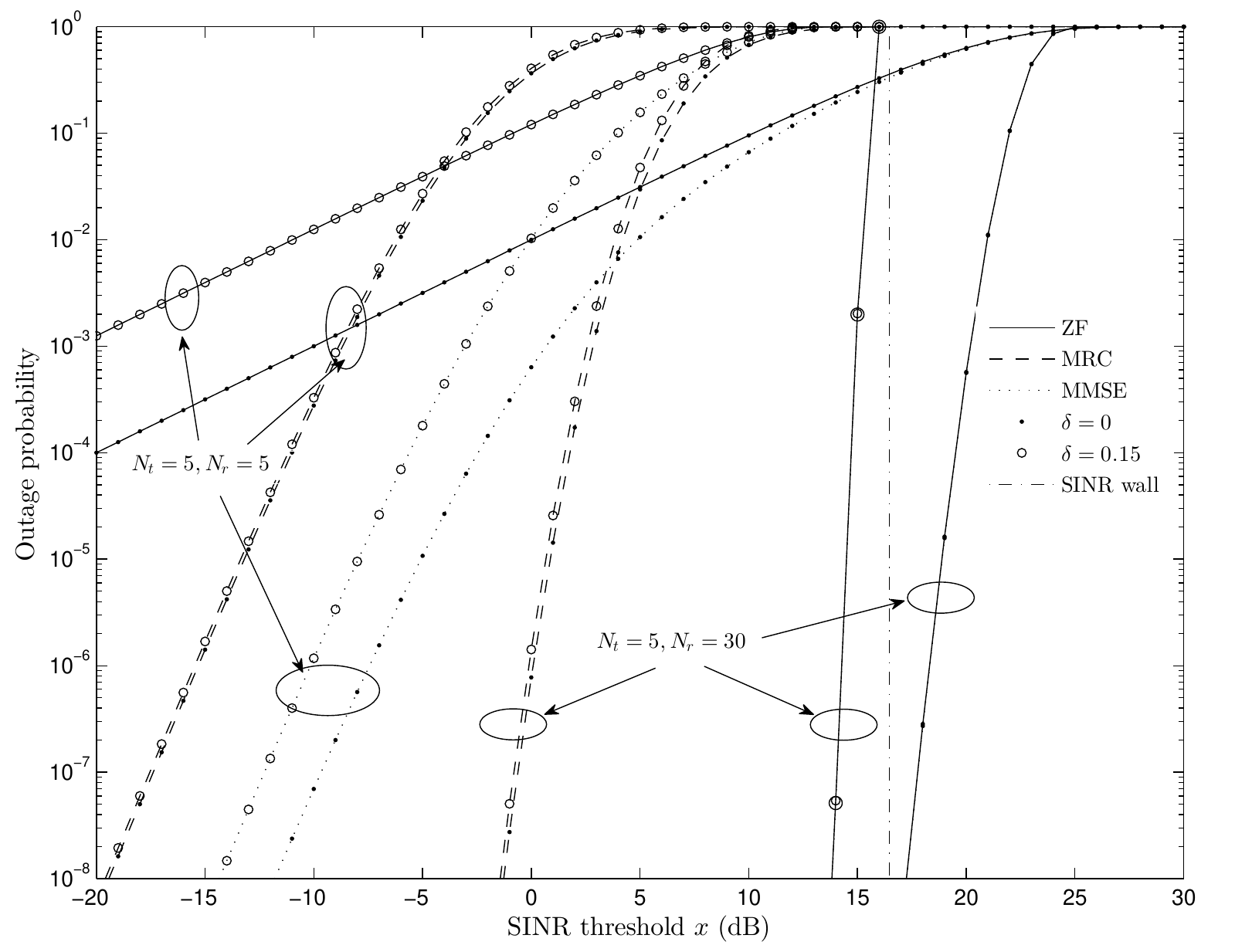}
\caption{Outage probability of linear receivers for different SINR threshold $x$.}\label{fig:CDFs}
\end{centering}
\end{figure}

Figure \ref{fig:Rate_LinRec_All} illustrates the ergodic achievable rates with ZF, MRC and MMSE linear receivers and optimized $T_p^*$.
The information-theoretic capacity, which can be approached by optimal receivers (OR), are generated according to our previous work \cite{xinlin2014training_ICC}. These curves provide upper bounds on the achievable rates obtained by linear receivers. The OR structure can be a maximum-likelihood receiver with joint decoding, or a MMSE receiver with successive interference cancellation (SIC). These curves are used as performance benchmarks.
In particular, we consider a MIMO system with $N_t = N_r = 4$ and coherence block $T=200$. The theoretical curves are plotted according to the ergodic achievable rate expressions in (\ref{eq:anal_ZF_rate})-(\ref{eq:anal_MMSE_rate}), while the numerical curves are plotted based on Monte-Carlo simulations. Clearly, there is an exact agreement between analytical and numerical results, thereby validating the correctness of the analytical expressions. At low SNRs, RTRI have negligible impact on the achievable rates for all receivers considered herein, while at high SNRs, RTRI decrease the achievable rates of ZF and MMSE receivers and generate finite achievable rate ceilings; however, RTRI do not have substantial impact on the performance of MRC receivers in this regime, due to the dominating inter-stream interference. As anticipated, MMSE receivers perform the best across the whole SNR regime. In contrast to ideal hardware systems, we can see that the performance gap between optimal receivers and MMSE/ZF receivers in hardware-impaired systems diminishes as SNR increases. In other words, for hardware-impaired systems, MMSE and ZF receivers become near optimal in the high SNR regime.

With the same system setup, we demonstrate the optimal training length $T_p^*$ for systems with different linear receivers in Fig. \ref{fig:Tp_LinRec_All}, $T_p^*$ values for optimal receivers are drawn according to \cite[Section IV-C]{xinlin2014training_ICC}.
For low SNR values, considerable amount of training is needed. As we have proved in Section \ref{sec:RateLow}, $T_p^*$ converges to $\frac{T}{2}$, as $\rho\rightarrow0$. As the SNR increases, ideal hardware systems require less channel training, until the pilot length reaches the minimum possible value $T_p^*=N_t$. However, for systems with RTRI, the conclusions are different. If ZF or MMSE receivers are considered, it is noteworthy that more pilot symbols are needed in the high SNR regime; this phenomenon indicates that, although the effective SINR  saturates due to RTRI, increasing the training length can still provide performance gains by improving the estimation accuracy. If MRC receivers are considered, there is no such extra training benefit. This is due to the fact that MRC receivers are dominated by inter-stream interference for both ideal and hardware-impaired systems, such that, the RTRI distortions have only insignificant impact on the system performance. As a result, the optimal training length for hardware-impaired systems  does not deviate much from that of ideal hardware systems. Moreover, MMSE and ZF receivers generally require more training, while MRC receivers require less training, in comparison to optimal receivers.
%Figure \ref{fig:high_SNR_Tp_LinRec_All} illustrates the optimal training length $T_p^*$ for the case $\rho\rightarrow\infty$. We see that, for systems with ZF and MMSE reception architectures, increasing $N_t$ leads to an increase in $T_p^*$, while increasing $N_r$ decreases the need for training.
\begin{figure}[h]
\begin{centering}
\includegraphics [height=0.5\textwidth]{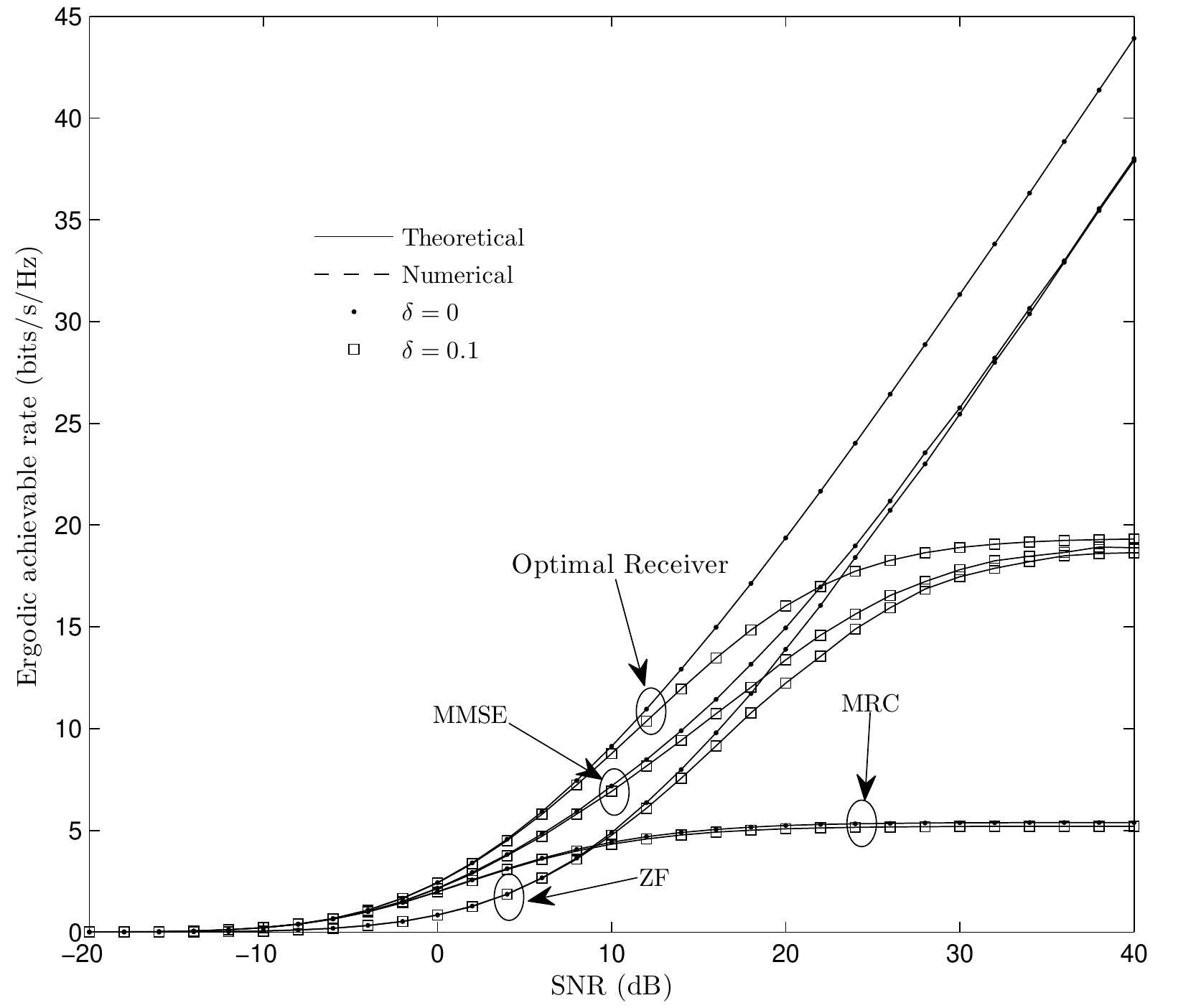}
\caption{Ergodic achievable rates with different linear receivers and optimized $T_p$ ($N_t = N_r = 4, T = 200$).}\label{fig:Rate_LinRec_All}
\end{centering}
\end{figure}
\begin{figure}[h]
\begin{centering}
\includegraphics [height=0.5\textwidth]{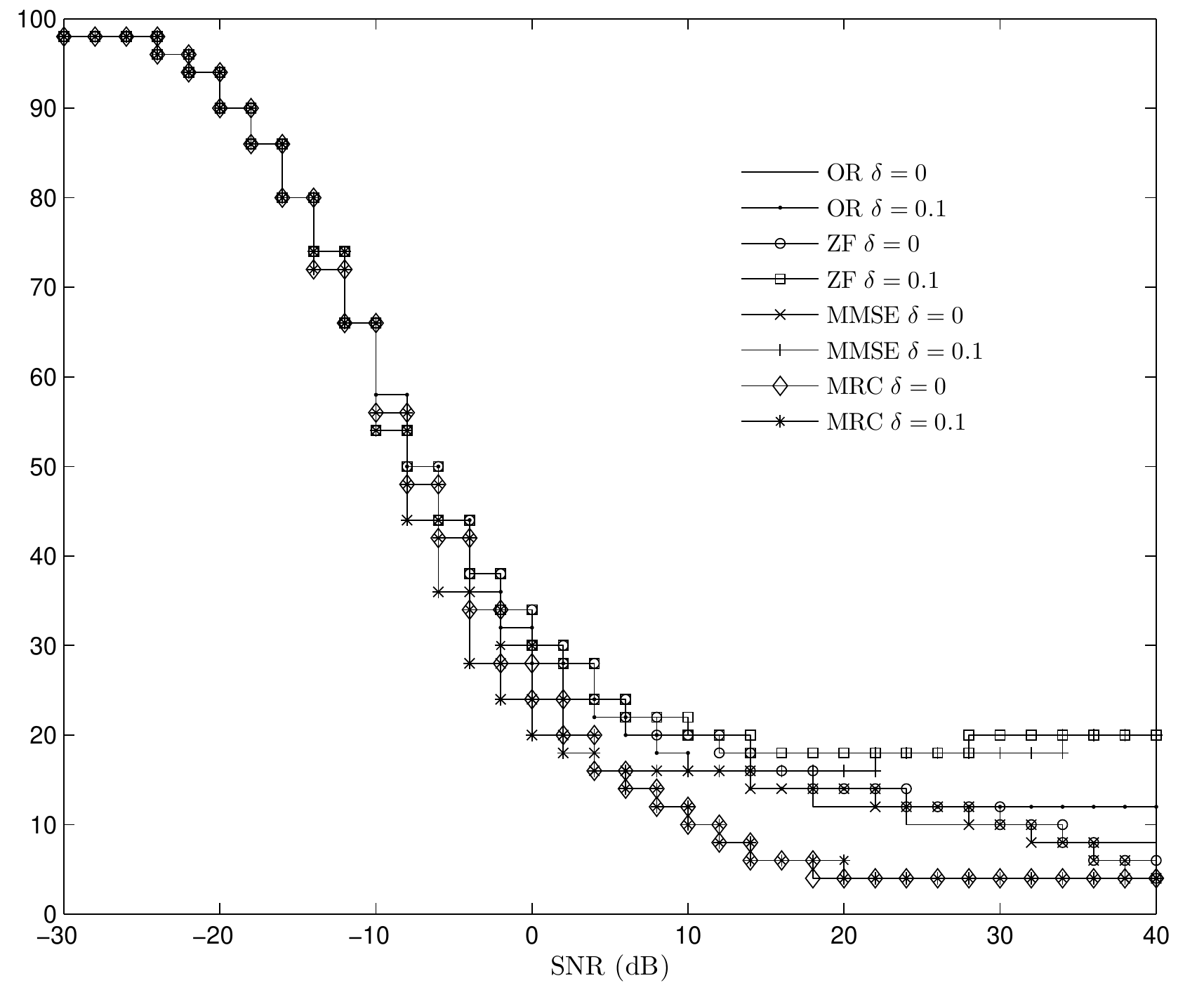}
\caption{Optimal training sequence length for linear receivers with various levels of RTRI ($N_t = N_r = 4, T = 200$).}\label{fig:Tp_LinRec_All}
\end{centering}
\end{figure}
\vspace{-0.3cm}
\section{Extension to Large-Scale MIMO Systems}\label{sec:MassiveMIMO}
Large-scale MIMO systems are considered as a key enabler for next generation wireless systems, since it can offer huge improvements in throughput, as well as, energy efficiency. Moreover, they are very robust against small-scale fading and intentional jamming \cite{Marzetta2010Massive_TWCOM}.  For large-scale MIMO systems with ideal hardware, the training overhead has been analyzed in, for example, \cite{Love2014Training1_JSTSP, Love2014Training2_JSTSP}, where it has been shown that the training requirement does not necessarily to be increased linearly with the number of antennas. However, large-scale MIMO systems are expected to deploy lower-cost radio frequency (RF) components, which are more prone to hardware impairments. Motivated by this important observation, in this section, we will investigate the impact of RTRI on large-scale systems. We derive deterministic equivalents of the ergodic achievable rates, and thereafter find the optimal training length. We assume that the coherence time $T$ is big enough to support our large dimensional analysis. As we will show later, our deterministic equivalents are accurate even for small number of antennas; therefore the dependence on large $T$ is not essential \cite{Hoydis2013Massive}.
\vspace{-0.5cm}
\subsection{Deterministic Equivalents}
For each type of linear receiver, we now present the deterministic equivalent approximations of the SINR and achievable rates in the large system limit, i.e., for $N_r, N_t \rightarrow\infty$ with a finite ratio $\beta \triangleq \frac{N_r}{N_t} \in [1, \infty)$.\footnote{Since the SINRs of each stream are statistically independent, we drop the dependence on $k$ for the sake of clarity.}
\begin{theorem}\label{theorem:bothlarge}
As $N_r, N_t \rightarrow\infty$, with a finite ratio $\beta = \frac{N_r}{N_t}\in[1, \infty)$,\footnote{For ZF receivers, we restrict $\beta$ to be in the range $(1, \infty)$. For $\beta =1$, one can show that $\bar{\mathbf H}^H\bar{\mathbf H}$ has at least one eigenvalue converging to zero \cite{Bai1993Smallest}, which makes the ZF analysis intractable. Due to page limit, we ignore the case of $\beta = 1$, one can find a detailed study of this case, for example, in \cite{Eldar2003ZFAsym_TIT}.} the SINRs of ZF, MRC, and MMSE receivers, almost surely converge to the following deterministic equivalents, respectively
\begin{align}
\gamma_{\mathrm{ZF}}^{N_r,N_t\rightarrow\infty} &\xrightarrow{a.s.} \bar{\gamma}_{\mathrm{ZF}} \triangleq \frac{\beta-1}{\delta^2(\beta-1)+c_1}\label{eq:deter_ZF},\\
\gamma_{\mathrm{MRC}}^{N_r,N_t\rightarrow\infty} &\xrightarrow{a.s.} \bar{\gamma}_{\mathrm{MRC}} \triangleq \frac{\beta}{1+\delta^2 + c_1+\delta^2\beta} \label{eq:deter_MRC},\\
\gamma_{\mathrm{MMSE}}^{N_r,N_t\rightarrow\infty} &\xrightarrow{a.s.} \bar{\gamma}_{\mathrm{MMSE}} \triangleq \frac{\frac{1}{2c_1}\left(-d + \sqrt{d^2 + \frac{4\beta c_1}{1+\delta^2}}  \right)}{1+\delta^2 + \frac{\delta^2}{2c_1}\left(-d + \sqrt{d^2 + \frac{4\beta c_1}{1+\delta^2}}  \right)}\label{eq:deter_MMSE},
\end{align}
where $c_1 \triangleq \frac{\rho+\rho\delta^2+1+\bar{\epsilon}}{\rho\bar{\epsilon}}$, $\bar{\epsilon} \triangleq \frac{T_p\rho}{N_t(\rho\delta^2+1)}$, and $d\triangleq\frac{c_1}{1+\delta^2}+1-\beta$.
\end{theorem}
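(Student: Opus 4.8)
The plan is to replace $\bar{\mathbf H}$ by a matrix with genuinely i.i.d.\ $\mathcal{CN}(0,1)$ entries---the Gaussian surrogate already justified in Footnote 3---and then deploy the standard asymptotic machinery of random matrix theory: the trace lemma (a quadratic form $\mathbf x^H\mathbf A\mathbf x$ with $\mathbf x$ independent of $\mathbf A$ and unit-variance entries concentrates on $\mathrm{tr}\,\mathbf A$), the rank-one (Sherman--Morrison) perturbation identity, the Bai--Yin edge for the smallest eigenvalue of $\frac{1}{N_t}\bar{\mathbf H}^H\bar{\mathbf H}$, and the Marchenko--Pastur law. The normalization that makes everything work is $c_0=N_tc_1$ (which holds because $\epsilon=\bar\epsilon$); this is exactly what keeps each relevant quadratic form $O(1)$ as $N_r,N_t\to\infty$. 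The three receivers are handled separately, and in each case almost-sure convergence follows from the $O(1/N)$ variance bounds underlying the trace lemma together with Borel--Cantelli, and from continuity of the rational SINR maps in \eqref{eq:SINR_ZF}--\eqref{eq:SINR_MMSE}, which lets the scalar limits be substituted directly.

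For ZF, the only random quantity in \eqref{eq:SINR_ZF} is $[(\bar{\mathbf H}^H\bar{\mathbf H})^{-1}]_{k,k}$. I would reuse the fact already invoked in the low-SNR analysis, namely that $1/[(\bar{\mathbf H}^H\bar{\mathbf H})^{-1}]_{k,k}$ is $\frac12\chi^2_{2(N_r-N_t+1)}$; by the law of large numbers it concentrates on $N_r-N_t+1$, so $c_0[(\bar{\mathbf H}^H\bar{\mathbf H})^{-1}]_{k,k}=N_tc_1[(\bar{\mathbf H}^H\bar{\mathbf H})^{-1}]_{k,k}\xrightarrow{a.s.}\frac{c_1}{\beta-1}$, and $\bar\gamma_{\mathrm{ZF}}$ follows in one line. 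This is the step that forces $\beta>1$, as flagged in the footnote. For MRC I would divide the numerator and denominator of \eqref{eq:SINR_MRC} by $N_tN_r$. A conditional law of large numbers gives $\|\bar{\mathbf h}_k\|^2\to N_r$ and $\frac{1}{N_t}\sum_{i\neq k}|\bar{\mathbf h}_k^H\bar{\mathbf h}_i|^2\to N_r$, so the signal term tends to $\beta$ while the three denominator pieces tend to $1$, $\delta^2(\beta+1)$, and $c_1$, respectively, giving $\bar\gamma_{\mathrm{MRC}}$.

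The MMSE case is the main obstacle, and I would not attack \eqref{eq:SINR_MMSE} directly but start from the scalarized form \eqref{eq:MMSE_SINR_trasform1}, since $\gamma_{\mathrm{MMSE}}$ is a fixed rational function of the single quadratic form $q\triangleq\bar{\mathbf h}_k^H\big(\mathbf M_k+\frac{c_0}{1+\delta^2}\mathbf I_{N_r}\big)^{-1}\bar{\mathbf h}_k$, where $\mathbf M_k\triangleq\sum_{i\neq k}\bar{\mathbf h}_i\bar{\mathbf h}_i^H$. Because $\bar{\mathbf h}_k$ is independent of $\mathbf M_k$, the trace lemma replaces $q$ by $\mathrm{tr}(\mathbf M_k+\frac{c_0}{1+\delta^2}\mathbf I)^{-1}$, and the crux is to find the deterministic equivalent of this resolvent trace. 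I would derive it self-consistently: with $\mathbf Q=(\mathbf M_k+\frac{c_0}{1+\delta^2}\mathbf I)^{-1}$, take the normalized trace of the identity $\mathbf Q\mathbf M_k+\frac{c_0}{1+\delta^2}\mathbf Q=\mathbf I$ and replace each $\bar{\mathbf h}_i^H\mathbf Q\bar{\mathbf h}_i$ by $\frac{N_r\bar e}{1+N_r\bar e}$ through a leave-one-out argument, where $\bar e=\lim\frac{1}{N_r}\mathrm{tr}\,\mathbf Q$. Using $c_0=N_tc_1$ and $N_t/N_r\to1/\beta$, this collapses to a scalar fixed-point for $q'\triangleq\lim q$ of the form $\frac{q'}{1+q'}+\frac{c_1}{1+\delta^2}q'=\beta$, i.e.\ a quadratic in $q'$; selecting its positive root and substituting back into \eqref{eq:MMSE_SINR_trasform1} produces $\bar\gamma_{\mathrm{MMSE}}$ after routine algebra.

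The technical heart is precisely this MMSE fixed-point step. One must establish that the resolvent trace has a deterministic limit and that the leave-one-out substitution is legitimate, even though the shift $c_0/(1+\delta^2)$ grows linearly with the dimension, so that the resolvent is evaluated at a dimension-dependent point; the correct bookkeeping---$q=O(1)$ while $\frac{1}{N_r}\mathrm{tr}\,\mathbf Q=O(1/N_r)$---must be tracked throughout, and the rank-one correction terms controlled well enough to upgrade convergence in probability to almost-sure convergence via summable variance bounds. Once the scalar fixed-point is in hand, the ZF and MRC limits and the final simplifications are routine. As elsewhere in the paper, the (mild) non-Gaussianity of $\bar{\mathbf H}$ is absorbed into the ``tightly approximated'' convention, so the i.i.d.\ Gaussian surrogate is used without further comment.
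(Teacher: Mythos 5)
Your overall route is the same as the paper's: concentrate the quadratic forms via the trace lemma, scalarize the MMSE SINR through \eqref{eq:MMSE_SINR_trasform1}, close the loop with a leave-one-out fixed point, and use the normalization $c_0=N_tc_1$ throughout. The only genuine divergence is the ZF step: you reuse the exact $\tfrac12\chi^2_{2(N_r-N_t+1)}$ law of $1/[(\bar{\mathbf H}^H\bar{\mathbf H})^{-1}]_{k,k}$ plus concentration, while the paper writes $[(\bar{\mathbf H}^H\bar{\mathbf H})^{-1}]_{k,k}=\mathbf v_k^H\boldsymbol\Lambda\mathbf v_k$ via the SVD, applies the trace lemma to $\mathbf v_k$, and uses $\frac{1}{N_t}\mathrm{tr}(\boldsymbol\Lambda)\to\frac{1}{N_r-N_t}$; your variant is more elementary but leans on exact Gaussianity, whereas the paper's extends beyond it. Your MRC computation and your MMSE fixed point $\frac{q'}{1+q'}+\frac{c_1}{1+\delta^2}\,q'=\beta$ coincide exactly with the paper's (the latter is precisely what \eqref{eq:STtrans1} combined with \eqref{eq:STtrans4} says), and your bookkeeping caveat about the shift $c_0/(1+\delta^2)$ growing linearly with $N_t$ is the right technical point.

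One step does not go through as you claim, and it is worth seeing why: solving your fixed point and ``substituting back into \eqref{eq:MMSE_SINR_trasform1}'' does not reproduce \eqref{eq:deter_MMSE}. With $a\triangleq\frac{c_1}{1+\delta^2}$ and $d=a+1-\beta$, your quadratic is $a\,(q')^2+d\,q'-\beta=0$, whose positive root is
\begin{equation*}
q'=\frac{-d+\sqrt{d^2+4a\beta}}{2a}=\frac{1+\delta^2}{2c_1}\left(-d+\sqrt{d^2+\frac{4\beta c_1}{1+\delta^2}}\right),
\end{equation*}
which carries a factor $(1+\delta^2)$ that is absent from the closed form the paper states right after \eqref{eq:STtrans4}. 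A quick check confirms this: for $\delta^2=1$, $c_1=2$, $\beta=2$ one has $d=0$, the true root is $\sqrt2$, while the paper's expression evaluates to $\sqrt2/2$, which does not satisfy the fixed-point equation. Substituting the correct root into $\gamma=\frac{q'}{1+\delta^2+\delta^2 q'}$ gives $\frac{u}{1+\delta^2 u}$ with $u\triangleq\frac{1}{2c_1}\bigl(-d+\sqrt{d^2+\frac{4\beta c_1}{1+\delta^2}}\bigr)$, not the expression $\frac{u}{1+\delta^2+\delta^2 u}$ displayed in \eqref{eq:deter_MMSE}; the two agree only when $\delta=0$. So your derivation is sound, but carried out honestly it exposes an algebra slip in the paper's own appendix (and, by inheritance, in the theorem's MMSE formula) rather than reproducing it. You should display the root explicitly and state the corrected deterministic equivalent, instead of deferring to ``routine algebra'' at exactly the point where the discrepancy sits.
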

\begin{IEEEproof}
See Appendix \ref{appendix:bothlarge}.
\end{IEEEproof}

Theorem \ref{theorem:bothlarge} provides us with deterministic approximations of the SINRs in the large-antenna regime, which do not rely on random channel realizations, as long as the ratio $\beta$ is fixed. The above results are quite general and tractable, and as we will show later, are very accurate even for small number of antennas. Note that some practical interesting system setups, for instance, $N_t \ll N_r$ (usually appearing in the multi-user MIMO uplink), can be treated as a special case of Theorem \ref{theorem:bothlarge}. In particular, we will give this result in the following corollary.
\begin{corollary}\label{corollary:Nr_Large}
For the case $N_t\ll N_r$, such that $\beta\rightarrow\infty$, the SINRs of ZF, MRC, and MMSE receivers, converge almost surely to the same deterministic equivalent\begin{equation}\label{eq:Nr_Large}
\gamma^{\beta\rightarrow\infty} \stackrel{a.s.}\longrightarrow \bar{\gamma} = \frac{1}{\delta^2}.
\end{equation}
\end{corollary}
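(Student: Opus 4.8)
The plan is to obtain the corollary directly from the deterministic equivalents of Theorem \ref{theorem:bothlarge} by letting the ratio $\beta = N_r/N_t$ grow without bound, treating $c_1$ (equivalently $\bar{\epsilon}$) as a quantity that is fixed with respect to $\beta$. Since Theorem \ref{theorem:bothlarge} has already established that each per-stream SINR converges almost surely to its deterministic counterpart $\bar{\gamma}_{\mathrm{ZF}}$, $\bar{\gamma}_{\mathrm{MRC}}$, $\bar{\gamma}_{\mathrm{MMSE}}$, the remaining task is purely analytic: I would evaluate $\lim_{\beta\rightarrow\infty}$ of each closed-form expression and verify that the three limits coincide at $1/\delta^2$.

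First I would dispose of the ZF and MRC cases, which are immediate. In \eqref{eq:deter_ZF}, dividing numerator and denominator by $\beta$ gives $\bar{\gamma}_{\mathrm{ZF}} = \frac{1-1/\beta}{\delta^2(1-1/\beta)+c_1/\beta}$, and since $c_1$ is $\beta$-independent, the last term of the denominator vanishes as $\beta\rightarrow\infty$, leaving $1/\delta^2$. The same normalization applied to \eqref{eq:deter_MRC} yields $\bar{\gamma}_{\mathrm{MRC}} = \frac{1}{\delta^2 + (1+\delta^2+c_1)/\beta}\rightarrow 1/\delta^2$.

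The only genuinely nontrivial part is the MMSE expression \eqref{eq:deter_MMSE}, where the interior quantity
\[
A \triangleq \frac{1}{2c_1}\left(-d+\sqrt{d^2+\frac{4\beta c_1}{1+\delta^2}}\right), \qquad d = \frac{c_1}{1+\delta^2}+1-\beta,
\]
must be shown to diverge, so that $\bar{\gamma}_{\mathrm{MMSE}} = A/(1+\delta^2+\delta^2 A)\rightarrow 1/\delta^2$. Here I would note that $d\rightarrow -\infty$ with $d\sim -\beta$, so $-d\sim\beta$ dominates the radicand; expanding $\sqrt{d^2+\frac{4\beta c_1}{1+\delta^2}} = |d|\sqrt{1+\frac{4\beta c_1}{(1+\delta^2)d^2}} = -d + \frac{2\beta c_1}{(1+\delta^2)(-d)} + o(1)$ and using $|d|=-d$ gives $-d+\sqrt{\cdots}\sim -2d\sim 2\beta$, hence $A\sim\beta/c_1\rightarrow\infty$. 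Since $t\mapsto t/(1+\delta^2+\delta^2 t)$ is continuous and increases to its horizontal asymptote $1/\delta^2$ as $t\rightarrow\infty$, the claim for MMSE follows, matching the ZF and MRC limits.

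I expect the main obstacle to be the bookkeeping of the square-root asymptotics in the MMSE case: one must confirm that the $O(\beta)$ contribution from $-d$ and the $O(\beta)$ leading term of the radical \emph{add} rather than cancel (they do, precisely because $d<0$ forces $-d=|d|$), and that the bounded lower-order correction $\frac{2c_1}{1+\delta^2}$ does not affect the leading $2\beta$ behavior. A secondary, more conceptual point worth a brief remark is the legitimacy of the iterated limit: strictly, Theorem \ref{theorem:bothlarge} fixes $\beta$ while $N_r,N_t\rightarrow\infty$, so the corollary should be read as the limiting value of the deterministic equivalents as $\beta\rightarrow\infty$, which is the correct description of the regime $N_t\ll N_r$.
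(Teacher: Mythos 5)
Your proposal is correct and follows the same route the paper takes: the corollary is stated as an immediate consequence of Theorem \ref{theorem:bothlarge}, obtained by letting $\beta\rightarrow\infty$ in the deterministic equivalents \eqref{eq:deter_ZF}--\eqref{eq:deter_MMSE} with $c_1$ held fixed, exactly as you do. Your careful treatment of the MMSE square-root asymptotics (and the remark that the $\beta\rightarrow\infty$ limit is taken on the deterministic equivalents, not inside the almost-sure convergence of Theorem \ref{theorem:bothlarge}) simply fills in details the paper leaves implicit.
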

This corollary is consistent with the asymptotic results in \cite{Emil2013imp_COML}.
From the above equation, we can see that the SINR reaches a limit when $N_t\ll N_r$. This limit, which is only quantified by the level of RTRI, is consistent with the outage probability analysis in (\ref{eq:cdf_ZF})-(\ref{eq:cdf_MMSE}). In this case, increasing the transmit power does not provide any SINR gain.

After establishing the asymptotic SINRs, we proceed to derive the achievable rates.
Using the continuous mapping theorem  \cite[Theorem 2.3]{Van2000Asymptotic} and dominated convergence theorem \cite[Theorem 16.4]{Billingsley2008Probability}, we can show that the achievable rates by different receivers converge to the following deterministic equivalent
\begin{equation}\label{eq:Rate_Asymp}
R^{N_r, N_t\rightarrow\infty} \longrightarrow \bar{R}= \left(1-\frac{T_p}{T}\right)N_t\log_2\left(1+\bar{\gamma}\right),
\end{equation}
where $\bar{\gamma}$ is found in (\ref{eq:deter_ZF}), (\ref{eq:deter_MRC}) and (\ref{eq:deter_MMSE}) for ZF, MRC and MMSE, respectively.
Note that one can easily prove that $\bar{R}$ is a strictly concave function in $T_p$ \cite{Boyd2004Convex}. Then, as proved in \cite[Theorem 4]{Jakob2011Training_TSP}, optimizing $R^{N_r, N_t\rightarrow\infty}$ is asymptotically identical to optimizing $\bar{R}$. Therefore, solving the following optimization problem will return the desired $T_p^*$:
\begin{align}\label{eq:Optimization}
&\textrm{maximize}  ~~~\bar{R}^{N_r, N_t\rightarrow\infty}\notag\\
&\textrm{subject to}  ~~~N_t\leq T_p<T.
\end{align}
With the results in Theorem \ref{theorem:bothlarge} and (\ref{eq:Rate_Asymp}), we can find $T_p^*$, for example, via a line search over different $T_p$ values (e.g., using the bisection method). As for the special case in Corollary \ref{corollary:Nr_Large}, it follows readily that the achievable rates of different receivers all converge to
\begin{equation}
R^{\beta\rightarrow\infty} \longrightarrow \bar{R} = \left(1-\frac{T_p}{T}\right)N_t\log_2\left(1+\frac{1}{\delta^2}\right).
\end{equation}
Obviously, in this scenario, $T_p^* = N_t$ channel uses should be devoted for training.  Furthermore, the number of transmit antennas should not be larger than half of the coherence time, i.e., $N_t\leq \frac{T}{2}$, otherwise, the channel training overhead  will cause a decrease in the achievable rate.

\vspace{-0.2cm}
\subsection{Numerical Illustrations}
Figure \ref{fig:DeviationAsymp} compares the deterministic equivalents of the ergodic achievable rates  with the ergodic achievable rates we derived in the previous section . For MMSE receivers, we see an excellent match in all cases, even for low dimensional systems. For small number of antennas, the achievable rate approximations of ZF receivers at low SNRs and of MRC receivers at high SNRs have moderate deviations from the true values, while these deviations vanish quickly as $N_r$ increases. Comparing these four plots, we can conclude that, for high SNR values, RTRI always make the convergence to the asymptotic expressions faster, whilst for low SNRs, RTRI slow down the convergence. We can also see that, increasing the ratio $\beta$ always guarantees better convergence.

In Fig. \ref{fig:MassiveTp}, we illustrate the optimal training length $T_p^*$ in the large-antenna regime. We note that in the upper subplot, where the system is ``not so massive'', RTRI still create extra training requirements in the high SNR regime. However, when the system is equipped with large receive antenna arrays, the optimal training length is reduced, and the optimal $T_p^*$ may be even less than those of  ideal hardware systems. The reason behind this phenomenon is that the improvement provided by estimation saturates very fast, and we can still have certain gain by allocating more time for data transmission.
\begin{figure}[h]
\begin{centering}
\includegraphics [height=0.75\textwidth]{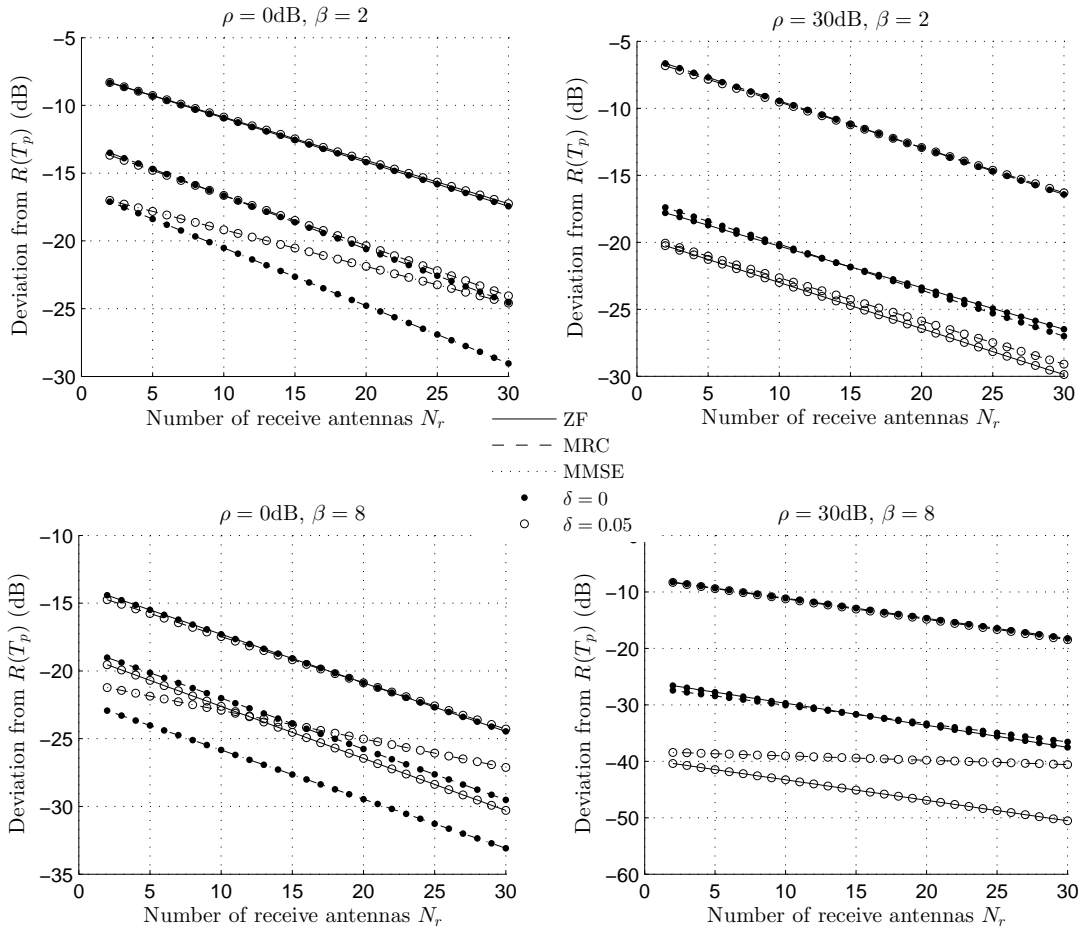}
\caption{Deviation of the achievable rate deterministic equivalents from the ergodic achievable rates ($T = 500$).}\label{fig:DeviationAsymp}
\end{centering}
\end{figure}
\begin{figure}[h]
\begin{centering}
\includegraphics [height=0.55\textwidth]{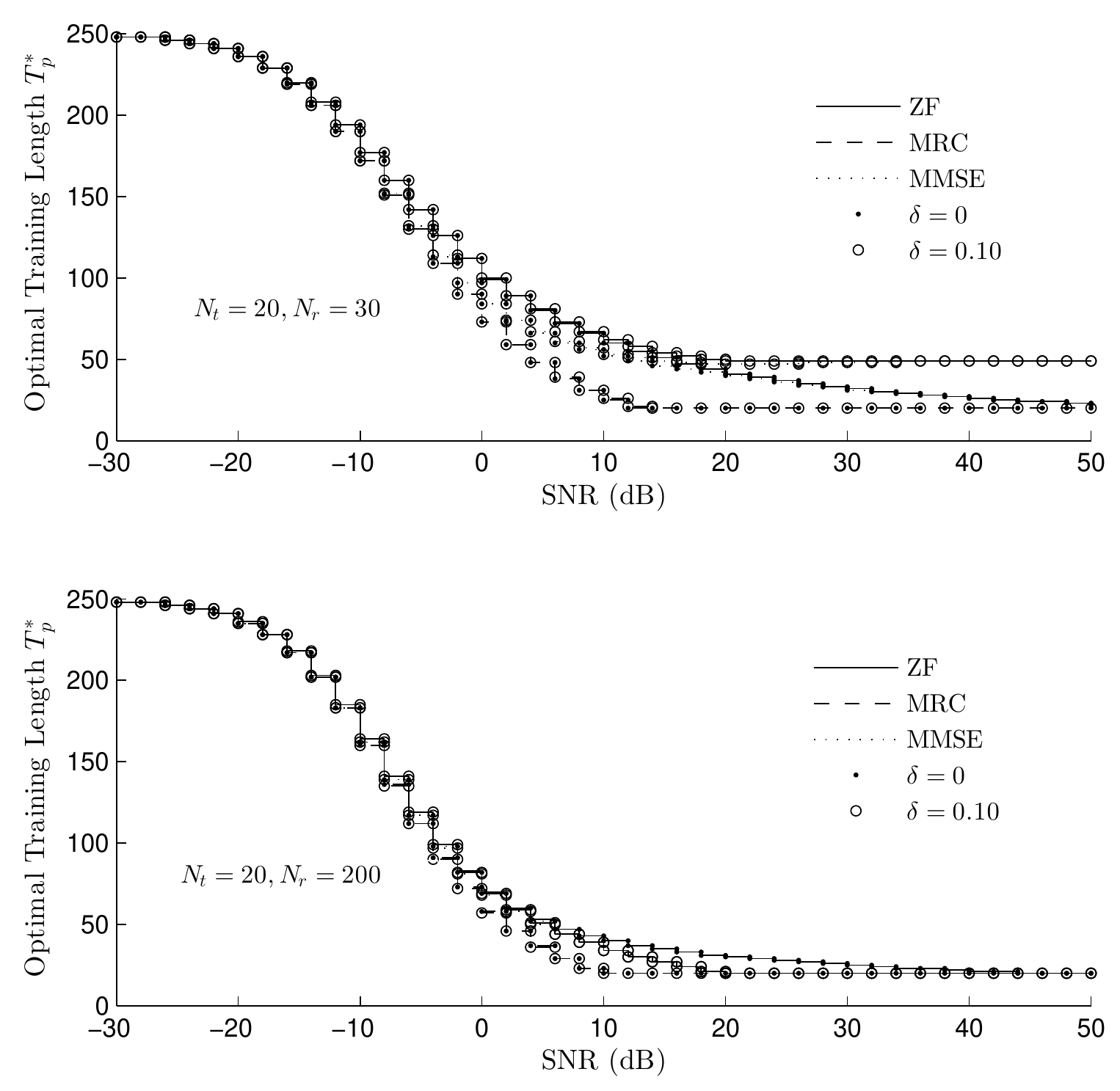}
\caption{Optimal training length in the asymptotic regime ($T = 500$).}\label{fig:MassiveTp}
\end{centering}
\end{figure}
\section{Conclusions}\label{sec:Conclusion}
In this paper, we analyzed the performance of training-based MIMO systems with residual hardware impairments. We derived a  LMMSE channel estimator for the proposed model, and deduced the irreducible estimation error floor in the high SNR regime. We derived SINR distributions for ZF, MRC and MMSE linear receivers, and found the corresponding closed-form ergodic achievable rates. We observed that the achievable rates saturated in the high SNR regime. The optimal training lengths were thereafter found through optimizing the ergodic achievable rates. Generally, RTRI imposed more training requirement at high SNRs for MMSE and ZF receivers. However, for MRC receivers, there was no difference in the optimal training length as compared to ideal hardware systems. Moreover, for large-scale systems, we derived deterministic equivalents of the ergodic achievable rates, which were shown to be accurate even for small number of antennas. In this case, we found that by deploying large receive antenna arrays, the extra training requirement imposed by RTRI was alleviated; with sufficiently large number of receive antennas, systems with RTRI needed even less training than ideal hardware systems. The reason behind this phenomenon is that, the improvement offered by channel estimation saturates very fast, while we can still have performance gain by allocating more channel uses to data transmission. Thus, it is important to find a good tradeoff between channel estimation and data transmission.

%\section*{Acknowledgments}
%The work of X.~Zhang, M.~Matthaiou and M. Coldrey has been supported in part by the Swedish Governmental Agency for Innovation Systems (VINNOVA) within the VINN Excellence Center Chase. The work of E.~Bj{\"o}rnson was supported by the International Postdoc Grant 2012-228 from the Swedish Research Council.

\appendices
\section{Useful random matrix theory results}
\begin{lemma}\label{lemma:MatrixInv}{\textit{(Matrix inversion lemma \cite{Hoydis2013Massive})}}
Let $\mathbf A\in\mathbb C^{N\times N}$ be a Hermitian invertible matrix, and let $\mathbf x\in\mathbb C^{N\times 1}$ and $\tau$ be arbitrary vector and scalar, respectively. Then, we have
\begin{equation}\label{eq:matrixinversion}
\mathbf x(\mathbf A + \tau\mathbf x\mathbf x^H)^{-1} = \frac{\mathbf x^H\mathbf A^{-1}}{1 + \tau\mathbf x^H\mathbf A^{-1}\mathbf x},
\end{equation}
\end{lemma}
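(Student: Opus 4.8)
The plan is to recognize \eqref{eq:matrixinversion} as the row-vector form of the Sherman--Morrison identity, and rather than first establishing the full rank-one update formula for $\left(\mathbf A + \tau\mathbf x\mathbf x^H\right)^{-1}$, to verify the claimed equality directly by clearing the inverse. Note that the right-hand side is a row vector, so for the dimensions to agree the left factor on the right-hand side must be read as $\mathbf x^H$ rather than $\mathbf x$; with this reading the whole statement lives in $\mathbb C^{1\times N}$ and can be checked by multiplying on the right by $\mathbf A + \tau\mathbf x\mathbf x^H$.

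Concretely, I would proceed as follows. First, observe that $\mathbf A + \tau\mathbf x\mathbf x^H$ is Hermitian, and assume it is invertible, which is the implicit hypothesis needed for the left-hand side to be well defined. Multiplying the claimed identity on the right by $\mathbf A + \tau\mathbf x\mathbf x^H$ reduces the statement to
\[
\mathbf x^H\left(1 + \tau\mathbf x^H\mathbf A^{-1}\mathbf x\right) = \mathbf x^H\mathbf A^{-1}\left(\mathbf A + \tau\mathbf x\mathbf x^H\right).
\]
Expanding the right-hand side gives $\mathbf x^H\mathbf A^{-1}\mathbf A + \tau\,\mathbf x^H\mathbf A^{-1}\mathbf x\,\mathbf x^H = \mathbf x^H + \tau\left(\mathbf x^H\mathbf A^{-1}\mathbf x\right)\mathbf x^H$, where the crucial point is that $\mathbf x^H\mathbf A^{-1}\mathbf x$ is a scalar and can therefore be pulled to the front. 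This coincides with the left-hand side after factoring out $\mathbf x^H$, so the identity follows.

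The computation itself is entirely routine, so the only genuine subtlety --- the \emph{main obstacle} such as it is --- lies in the well-posedness of both sides. The denominator $1 + \tau\mathbf x^H\mathbf A^{-1}\mathbf x$ must be nonzero, and one should note that this is not an additional assumption: using the determinant identity $\det\!\left(\mathbf A + \tau\mathbf x\mathbf x^H\right) = \det(\mathbf A)\left(1 + \tau\mathbf x^H\mathbf A^{-1}\mathbf x\right)$, the vanishing of the denominator is equivalent to $\mathbf A + \tau\mathbf x\mathbf x^H$ being singular, which is precisely the case excluded by requiring the rank-one update to be invertible. Hence whenever the left-hand side is defined, the denominator on the right is automatically nonzero and the two expressions coincide.
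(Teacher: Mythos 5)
Your proof is correct. There is, however, nothing in the paper to compare it against: the paper states this lemma as a known result, citing the random matrix theory literature, and gives no proof of its own. Your direct verification — multiplying the claimed identity on the right by $\mathbf A + \tau\mathbf x\mathbf x^H$ and using the fact that $\mathbf x^H\mathbf A^{-1}\mathbf x$ is a scalar that commutes past $\mathbf x^H$ — is a complete and elementary argument that fills this gap. You are also right on the two points the statement glosses over: the left-hand side must be read as $\mathbf x^H\left(\mathbf A + \tau\mathbf x\mathbf x^H\right)^{-1}$ for the dimensions to agree (the printed $\mathbf x$ is a typo; the cited source states it with $\mathbf x^H$), and invertibility of $\mathbf A + \tau\mathbf x\mathbf x^H$ — equivalently, by the determinant identity $\det\left(\mathbf A + \tau\mathbf x\mathbf x^H\right) = \det(\mathbf A)\left(1 + \tau\mathbf x^H\mathbf A^{-1}\mathbf x\right)$, nonvanishing of the denominator — is the implicit hypothesis making both sides well defined. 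One small remark: the Hermitian assumption on $\mathbf A$ plays no role in your algebra (only invertibility is used); it appears in the statement because that is the form in which the lemma is applied in the paper, where $\mathbf A$ is a covariance-type matrix, in which case $\tau \geq 0$ and positive definiteness make invertibility of the update automatic.
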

\begin{lemma}\label{lemma:TraceLemma}{\textit{(Trace lemma \cite[Lemma B.26]{Bai2009RMBook})}}
Let $\mathbf A\in\mathbb C^{N\times N}$ and $\mathbf x$, $\mathbf y \sim \mathcal{CN}(\mathbf 0, \frac{1}{N}\mathbf I_N)$. Assume that $\mathbf A$ has uniformly bounded spectral norm with respect to $N$, and that $\mathbf x$ and $\mathbf y$ are mutually independent, and independent of $\mathbf A$. Then, as $N\rightarrow\infty$, we have
\begin{align}
\mathbf x^H\mathbf A\mathbf x-\frac{1}{N}\mathrm{tr}\left(\mathbf A\right)\stackrel{a.s.}\longrightarrow 0\\
\mathbf x^H\mathbf A\mathbf y \stackrel{a.s.}\longrightarrow 0.
\end{align}
\end{lemma}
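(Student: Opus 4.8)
The plan is to prove both statements by the standard moment-plus-Borel--Cantelli route of random matrix theory, handling the quadratic form first and treating the bilinear form by the same template. Writing $\mathbf x = \frac{1}{\sqrt N}\mathbf g$ with $\mathbf g\sim\mathcal{CN}(\mathbf 0,\mathbf I_N)$, we have $\mathbf x^H\mathbf A\mathbf x = \frac1N\mathbf g^H\mathbf A\mathbf g$. Conditioning on $\mathbf A$ (permissible since $\mathbf x$ is independent of $\mathbf A$) and using $\mathbb E[\mathbf g\mathbf g^H]=\mathbf I_N$, I would first confirm that the centering is exactly the conditional mean,
\begin{equation}
\mathbb E\!\left[\mathbf x^H\mathbf A\mathbf x \,\big|\,\mathbf A\right] = \tfrac{1}{N}\mathrm{tr}\!\left(\mathbf A\,\mathbb E[\mathbf g\mathbf g^H]\right) = \tfrac{1}{N}\mathrm{tr}(\mathbf A),
\end{equation}
so that $Z_N \triangleq \mathbf x^H\mathbf A\mathbf x - \frac1N\mathrm{tr}(\mathbf A)$ is a zero-mean fluctuation and the task reduces to quantifying its decay.

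The crux is a bound on a higher moment of $Z_N$. I would invoke the complex Gaussian moment inequality behind \cite[Lemma B.26]{Bai2009RMBook}: for i.i.d. unit-variance entries there is a constant $C_p$ such that
\begin{equation}
\mathbb E\!\left[\left|\mathbf g^H\mathbf A\mathbf g - \mathrm{tr}(\mathbf A)\right|^p\right] \le C_p\!\left[\big(\mathrm{tr}(\mathbf A\mathbf A^H)\big)^{p/2} + \mathrm{tr}\!\big((\mathbf A\mathbf A^H)^{p/2}\big)\right].
\end{equation}
Restoring the $1/N$ scaling and using the hypothesis $\|\mathbf A\|_2=O(1)$ together with $\mathrm{tr}(\mathbf A\mathbf A^H)\le N\|\mathbf A\|_2^2$ and $\mathrm{tr}((\mathbf A\mathbf A^H)^{p/2})\le N\|\mathbf A\|_2^{p}$ gives
\begin{equation}
\mathbb E\!\left[|Z_N|^p\right] \le C_p\|\mathbf A\|_2^{p}\left(N^{-p/2}+N^{1-p}\right)=O\!\left(N^{-p/2}\right),
\end{equation}
the first term dominating for $p\ge2$.

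To upgrade this to almost-sure convergence, note that the case $p=2$ only yields a variance of order $N^{-1}$, which is not summable and delivers merely convergence in probability. I would therefore take $p=4$, so that $\mathbb E[|Z_N|^4]=O(N^{-2})$. For any $\varepsilon>0$, Markov's inequality gives $\mathbb P(|Z_N|>\varepsilon)\le \varepsilon^{-4}\,\mathbb E[|Z_N|^4]=O(N^{-2})$, which is summable in $N$; the first Borel--Cantelli lemma then forces $\mathbb P(|Z_N|>\varepsilon \text{ i.o.})=0$. Intersecting these full-measure events over a sequence $\varepsilon_m\downarrow0$ yields $Z_N\xrightarrow{a.s.}0$, which is the first claim.

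For the bilinear form the structure is identical. Independence of $\mathbf x$ and $\mathbf y$ makes the mean vanish, and the matching moment bound---either the bilinear case of \cite[Lemma B.26]{Bai2009RMBook} or a conditioning argument on $\mathbf y$ followed by the linear-form inequality---reads $\mathbb E[|\mathbf x^H\mathbf A\mathbf y|^p]\le C_p N^{-p}\big(\mathrm{tr}(\mathbf A\mathbf A^H)\big)^{p/2}\le C_p\|\mathbf A\|_2^{p}N^{-p/2}$. Taking $p=4$ and repeating the Markov/Borel--Cantelli step gives $\mathbf x^H\mathbf A\mathbf y\xrightarrow{a.s.}0$. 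The main obstacle is the $p$-th moment inequality itself: for $p>2$ a direct Wick/Isserlis expansion of the Gaussian form generates a proliferating number of pairing contractions, and the genuine work lies in showing each is dominated by the two spectral functionals $(\mathrm{tr}(\mathbf A\mathbf A^H))^{p/2}$ and $\mathrm{tr}((\mathbf A\mathbf A^H)^{p/2})$ with an $N$-independent constant. This is exactly what \cite[Lemma B.26]{Bai2009RMBook} encapsulates, so I would cite it rather than re-derive the combinatorics.
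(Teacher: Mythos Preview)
Your proposal is a correct and standard treatment, but there is no proof in the paper to compare against: Lemma~\ref{lemma:TraceLemma} is stated in Appendix~A as a known result and simply cited from \cite[Lemma~B.26]{Bai2009RMBook}, with no argument supplied. What you have written is essentially a sketch of the proof underlying that reference---the $p$-th moment bound for centered quadratic forms combined with Markov's inequality and Borel--Cantelli at $p\ge 4$---so your approach is consistent with the cited source rather than differing from it.
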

\begin{lemma}\label{lemma:Rank1Lemma}{\textit{(Rank-1 perturbation lemma \cite{Bai1995Empirical}) }}
Let $z < 0$, $\mathbf A, \mathbf B\in\mathbb C^{N\times N}$ with $\mathbf B$ Hermitian nonnegative definite, and $\mathbf v\in\mathbb{C}^N$, then
\begin{equation}
\left| \mathrm{tr}\left((\mathbf B - z\mathbf I_N)^{-1} - (\mathbf B + \mathbf v\mathbf v^H - z\mathbf I_N)^{-1}\right)\mathbf A\right|\leq\frac{||\mathbf A||_2}{|z|}.
\end{equation}
\end{lemma}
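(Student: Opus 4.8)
The plan is to reduce the difference of the two resolvents to an explicit rank-one expression via the Sherman--Morrison identity, and then to dominate that expression by a scalar bound depending only on $\|\mathbf{A}\|_2$ and $|z|$.

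First I would set $\mathbf{M}\triangleq\mathbf{B}-z\mathbf{I}_N$. Since $z<0$ is real, $\mathbf{M}=\mathbf{B}+|z|\mathbf{I}_N$, and because $\mathbf{B}$ is Hermitian nonnegative definite, $\mathbf{M}$ is Hermitian positive definite with $\lambda_{\min}(\mathbf{M})\geq|z|$. In particular $\mathbf{M}$ is invertible, $\mathbf{M}^{-1}\preceq\frac{1}{|z|}\mathbf{I}_N$, and the perturbed matrix $\mathbf{M}+\mathbf{v}\mathbf{v}^H$ stays positive definite so its inverse is well defined. Applying the rank-one inverse update (Sherman--Morrison, the full-matrix counterpart of Lemma~\ref{lemma:MatrixInv}) yields
$$\mathbf{M}^{-1}-(\mathbf{M}+\mathbf{v}\mathbf{v}^H)^{-1}=\frac{\mathbf{M}^{-1}\mathbf{v}\mathbf{v}^H\mathbf{M}^{-1}}{1+\mathbf{v}^H\mathbf{M}^{-1}\mathbf{v}}.$$

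Next I would right-multiply by $\mathbf{A}$, take the trace, and use its cyclic invariance together with the substitution $\mathbf{u}\triangleq\mathbf{M}^{-1}\mathbf{v}$ (so that $\mathbf{u}^H=\mathbf{v}^H\mathbf{M}^{-1}$ by Hermiticity of $\mathbf{M}^{-1}$) to collapse the trace to the scalar
$$\mathrm{tr}\!\left[\big(\mathbf{M}^{-1}-(\mathbf{M}+\mathbf{v}\mathbf{v}^H)^{-1}\big)\mathbf{A}\right]=\frac{\mathbf{u}^H\mathbf{A}\mathbf{u}}{1+\mathbf{v}^H\mathbf{M}^{-1}\mathbf{v}}.$$
Since $\mathbf{A}$ is \emph{not} assumed Hermitian, I would bound the numerator by Cauchy--Schwarz rather than any quadratic-form positivity: $|\mathbf{u}^H\mathbf{A}\mathbf{u}|\leq\|\mathbf{u}\|\,\|\mathbf{A}\mathbf{u}\|\leq\|\mathbf{A}\|_2\|\mathbf{u}\|^2=\|\mathbf{A}\|_2\,\mathbf{v}^H\mathbf{M}^{-2}\mathbf{v}$.

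Finally I would close the estimate with two elementary inequalities. Conjugating $\mathbf{M}^{-1}\preceq\frac{1}{|z|}\mathbf{I}_N$ by the positive-definite square root $\mathbf{M}^{-1/2}$ gives $\mathbf{M}^{-2}=\mathbf{M}^{-1/2}\mathbf{M}^{-1}\mathbf{M}^{-1/2}\preceq\frac{1}{|z|}\mathbf{M}^{-1}$, hence $\mathbf{v}^H\mathbf{M}^{-2}\mathbf{v}\leq\frac{1}{|z|}\,\mathbf{v}^H\mathbf{M}^{-1}\mathbf{v}$; and writing $t\triangleq\mathbf{v}^H\mathbf{M}^{-1}\mathbf{v}\geq0$ we have $\frac{t}{1+t}\leq1$. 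Combining these,
$$\left|\mathrm{tr}\!\left[\big(\mathbf{M}^{-1}-(\mathbf{M}+\mathbf{v}\mathbf{v}^H)^{-1}\big)\mathbf{A}\right]\right|\leq\frac{\|\mathbf{A}\|_2}{|z|}\cdot\frac{t}{1+t}\leq\frac{\|\mathbf{A}\|_2}{|z|},$$
which is exactly the claimed bound. The only mildly delicate step is the operator inequality $\mathbf{M}^{-2}\preceq\frac{1}{|z|}\mathbf{M}^{-1}$, justified by the square-root conjugation above; every other step is routine once the Sherman--Morrison reduction is in place, so I expect no serious obstacle.
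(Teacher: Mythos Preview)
Your argument is correct. Note, however, that the paper does not actually prove this lemma: it is stated in Appendix~A as a known random-matrix tool and attributed to \cite{Bai1995Empirical}, so there is no ``paper's own proof'' to compare against. What you have written is essentially the standard proof of the rank-one perturbation lemma: Sherman--Morrison reduces the resolvent difference to a rank-one term, cyclicity of the trace collapses it to the scalar $\mathbf{u}^H\mathbf{A}\mathbf{u}/(1+t)$, and the positive-definiteness of $\mathbf{M}=\mathbf{B}+|z|\mathbf{I}_N$ yields both $\mathbf{M}^{-2}\preceq\tfrac{1}{|z|}\mathbf{M}^{-1}$ and $t\geq 0$, which together give the bound. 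All steps are sound, including your handling of a general (not necessarily Hermitian) $\mathbf{A}$ via Cauchy--Schwarz.
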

\begin{lemma}\label{lemma:STLemma}
Let $\mathbf A\in\mathbb C^{N\times n}$, denote $F^{\mathbf A^H\mathbf A}$ as the eigenvalue CDF of $\mathbf A^H\mathbf A$, and $m_{F^{\mathbf A^H\mathbf A}}$ as the Stieltjes transform of $F^{\mathbf A^H\mathbf A}$ . Then, for $z\in\mathbb C\backslash \mathbb R^+$\footnote{The notation $z\in\mathbb C\backslash \mathbb R^+$ denotes all $z\in\mathbb C, z \not\in\mathbb R^+$.}, we have \cite[Lemma 3.1]{Couillet2011a}
\begin{equation}
\frac{n}{N}m_{F^{\mathbf A^H\mathbf A}}(z) = m_{F^{\mathbf A\mathbf A^H}}(z) + \frac{N-n}{N}\frac{1}{z}.
\end{equation}
\end{lemma}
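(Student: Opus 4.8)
The plan is to reduce the identity to an elementary fact about the shared nonzero spectra of the two Gram matrices $\mathbf A^H\mathbf A$ and $\mathbf A\mathbf A^H$. First I would fix the resolvent representation of the Stieltjes transform: for any Hermitian matrix $\mathbf B$ of dimension $m$, $m_{F^{\mathbf B}}(z) = \int (\lambda - z)^{-1}\,dF^{\mathbf B}(\lambda) = \frac{1}{m}\mathrm{tr}\big((\mathbf B - z\mathbf I_m)^{-1}\big)$. Since $\mathbf A^H\mathbf A$ and $\mathbf A\mathbf A^H$ are positive semidefinite, all their eigenvalues lie in $\mathbb R^+$, so the restriction $z\in\mathbb C\backslash\mathbb R^+$ guarantees both that the resolvents exist and that $z\neq 0$. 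Applying this to the $n\times n$ matrix $\mathbf A^H\mathbf A$ and the $N\times N$ matrix $\mathbf A\mathbf A^H$ gives $n\,m_{F^{\mathbf A^H\mathbf A}}(z) = \mathrm{tr}\big((\mathbf A^H\mathbf A - z\mathbf I_n)^{-1}\big)$ and $N\,m_{F^{\mathbf A\mathbf A^H}}(z) = \mathrm{tr}\big((\mathbf A\mathbf A^H - z\mathbf I_N)^{-1}\big)$.

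The key structural input, which I would obtain from the singular value decomposition of $\mathbf A$, is that $\mathbf A^H\mathbf A$ and $\mathbf A\mathbf A^H$ have \emph{identical nonzero eigenvalues with identical multiplicities}. Setting $r\triangleq\mathrm{rank}(\mathbf A)\leq\min(n,N)$ and letting $\mu_1,\dots,\mu_r$ denote the common nonzero eigenvalues, the spectrum of $\mathbf A^H\mathbf A$ consists of $\mu_1,\dots,\mu_r$ together with $n-r$ zeros, while the spectrum of $\mathbf A\mathbf A^H$ consists of the same $\mu_1,\dots,\mu_r$ together with $N-r$ zeros. I would then evaluate each trace as a sum over eigenvalues, separating the nonzero part from the zero part:
\begin{equation}
n\,m_{F^{\mathbf A^H\mathbf A}}(z) = \sum_{j=1}^{r}\frac{1}{\mu_j - z} - \frac{n-r}{z}, \qquad N\,m_{F^{\mathbf A\mathbf A^H}}(z) = \sum_{j=1}^{r}\frac{1}{\mu_j - z} - \frac{N-r}{z},
\end{equation}
where each zero eigenvalue contributes $(0-z)^{-1}=-1/z$.

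The final step is pure bookkeeping: subtracting the two expressions cancels the shared sum $\sum_{j=1}^{r}(\mu_j - z)^{-1}$ and leaves $n\,m_{F^{\mathbf A^H\mathbf A}}(z) - N\,m_{F^{\mathbf A\mathbf A^H}}(z) = -\frac{(n-r)-(N-r)}{z} = \frac{N-n}{z}$; dividing by $N$ and rearranging yields exactly $\frac{n}{N}m_{F^{\mathbf A^H\mathbf A}}(z) = m_{F^{\mathbf A\mathbf A^H}}(z) + \frac{N-n}{N}\frac{1}{z}$. There is no genuine obstacle here—the statement is a deterministic algebraic identity valid for every fixed $\mathbf A$, not an asymptotic result. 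The only points demanding care are the correct accounting of the differing numbers of zero eigenvalues ($n-r$ versus $N-r$), which is precisely what produces the $\frac{N-n}{N}\frac{1}{z}$ correction term, and confirming well-definedness at the origin, which the hypothesis $z\in\mathbb C\backslash\mathbb R^+$ supplies.
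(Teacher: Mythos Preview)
Your argument is correct: the identity is a deterministic consequence of the fact that $\mathbf A^H\mathbf A$ and $\mathbf A\mathbf A^H$ share the same nonzero eigenvalues with matching multiplicities, so that the two resolvent traces differ only in the count of zero eigenvalues, yielding the $\frac{N-n}{N}\frac{1}{z}$ correction. The hypothesis $z\in\mathbb C\backslash\mathbb R^+$ is used exactly where you say, to keep both resolvents and $1/z$ well defined.

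For comparison with the paper: the paper does not supply a proof of this lemma at all. It is recorded in the appendix of auxiliary random matrix theory results and attributed directly to \cite[Lemma~3.1]{Couillet2011a}. Your derivation is the standard spectral-counting proof that underlies that reference, so there is no methodological discrepancy to discuss; you have simply filled in what the paper outsources.
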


\section{CDF of the SINR with ZF receivers}\label{appendix:zf_cdf}
Denoting $x \triangleq 1/{\left[\left(\mathbf H^H \mathbf H\right)^{-1}\right]_{k,k}}$, we know  from \cite{winters1994diversity} that $x$ follows a chi-square distribution with $2(N_r-N_t+1)$ degrees of freedom and is scaled by $1\slash2$, with its CDF given by
\begin{equation}
F_X(x) = 1-e^{-x}\sum\limits_{k=0}^{N_r-N_t}\frac{x^k}{k!}.
\end{equation}
Rewrite the SINR of ZF receiver as
\begin{equation}
\gamma_{\mathrm{ZF}} = \frac{X}{\delta^2X+c_0},
\end{equation}
then we finally have
\begin{equation}
F_{\Gamma_{\mathrm{ZF}}}(\gamma) = \Pr(\Gamma_{\mathrm{ZF}}\leq\gamma) = F_X\left(\frac{c_0\gamma}{1-\delta^2\gamma}\right).
\end{equation}

\section*{CDF of the SINR with MRC receivers}\label{appendix:mrc_cdf}
First, we rewrite $\gamma_{\mathrm{MRC}}$ as
\begin{align}
\gamma_{\mathrm{MRC}} &= \frac{\bar{\mathbf h}_k^H\bar{\mathbf h}_k}{(1+\delta^2)\sum\limits_{i=1,i\neq N_t}^{N_t}\frac{|\bar{\mathbf h}_k^H\bar{\mathbf h}_i|^2}{\bar{\mathbf h}_k^H\bar{\mathbf h}_k} + \delta^2\bar{\mathbf h}_k^H\bar{\mathbf h}_k +c_0}\notag\\
& = \frac{Z}{(1+\delta^2)Y+\delta^2Z+c_0},
\end{align}
where $Z$  is chi-square distributed with $2N_r$ degrees of freedom, and $Y$ is chi-square distributed with $2(N_t-1)$ degrees of freedom with PDF \cite{Shah1998MRC}
\begin{equation}
f_Y(y) = \frac{e^{-y}y^{N_t-2}}{(N_t-2)!}.
\end{equation}
We can now express the CDF of the SINR with MRC as
\begin{align}
F_{\Gamma_{\mathrm{MRC}}}(\gamma) &= \Pr(\Gamma_{\mathrm{MRC}}\leq\gamma)\\
&=\int_0^{\infty}F_Z\left(\frac{(1+\delta^2)\gamma}{1-\delta^2\gamma}y+\frac{c_0\gamma}{1-\delta^2\gamma}\right)f_Y(y)dy\notag\\
& = 1 - \frac{e^{-\frac{c_0\gamma}{1-\delta^2\gamma}}}{(N_t-2)!}\sum\limits_{k=0}^{N_r-1}\frac{\left(\frac{c_0\gamma}{1-\delta^2\gamma}\right)^k}{k!}\sum\limits_{p=0}^{k}{k\choose p}\left(\frac{1+\delta^2}{c_0}\right)^p\int_{0}^{\infty}y^{N_t+p-2}e^{-\frac{1+\gamma}{1-\delta^2\gamma}y}dy\label{eq:MRC_CDF_int}\\
& = 1-\frac{e^{-\frac{c_0\gamma}{1-\delta^2\gamma}}}{(N_t-2)!}\sum\limits_{k=0}^{N_r-1}\frac{\left(\frac{c_0\gamma}{1-\delta^2\gamma}\right)^k}{k!}\sum\limits_{p=0}^{k}{k\choose p}\left(\frac{1+\delta^2}{c_0}\right)^p\frac{(N_t+p-2)!}{\left(\frac{1+\gamma}{1-\delta^2\gamma}\right)^{N_t+p-1}}\\
& = 1-\frac{e^{-\frac{c_0\gamma}{1-\delta^2\gamma}}}{\left(\frac{1+\gamma}{1-\delta^2\gamma}\right)^{N_t-1}}\sum\limits_{k=0}^{N_r-1}\sum\limits_{p=0}^{k}\underbrace{\frac{{{N_t+p-2}\choose p}}{(k-p)!}\left(\frac{1+\delta^2}{c_0}\right)^p}_{\triangleq{\alpha_{p,k}}}\frac{\left(\frac{c_0\gamma}{1-\delta^2\gamma}\right)^k}{\left(\frac{1+\gamma}{1-\delta^2\gamma}\right)^{p}}.
\end{align}
The integral in \ref{eq:MRC_CDF_int} can be evaluated via the identity  \cite[Eq.~(3.381.4)]{gradshteyn2007a}.

\section*{CDF of the SINR with MMSE receivers}\label{appendix:mmse_cdf}
We start from the equivalent form in (\ref{eq:MMSE_SINR_trasform1}), and have
\begin{equation}
\gamma_{\mathrm{MMSE}} = \frac{W}{1+\delta^2+\delta^2W},
\end{equation}
where $W \triangleq {{\bar {\mathbf h}_k}^H\left(\sum_{i=1,i\neq k}^{N_t}{\bar {\mathbf h}_i}{\bar {\mathbf h}_i}^H + \frac{c_0}{1+\delta^2}  \mathbf I_{N_r}   \right)^{-1}{\bar {\mathbf h}_k}}$ is a classical MMSE SINR expression, with CDF given by \cite[Eq. 11]{Gao1998MMSE} and  \cite[Eq. 8]{Mckay2009LinearReceiver}
\begin{equation}
F_W(w) = 1- e^{-\frac{c_0x}{1+\delta^2}}\sum\limits_{k=0}^{N_r-1}\frac{A_k(w)}{k!}\left(\frac{c_0w}{1+\delta^2}\right)^k,
\end{equation}
where
$$A_k(w) \triangleq \left\{ \begin{array}{rl}
 &1, ~~~~~~~~~\mbox{ if $N_r\geq N_t+k$}, \\
  &\frac{1+\sum\limits_{i=1}^{N_r-k-1}{N_t-1\choose i}x^i}{\left(1+x\right)^{N_t-1}}, \mbox{ otherwise.}
\end{array} \right.$$
Using a similar approach as in the above proofs, followed by some simple manipulations, we obtain the final result.

\section{Proof of Theorem \ref{theorem:anal_rates}}\label{appendix:anal_rates}
\subsection{Achievable rate of ZF receivers}
Firstly, we can easily prove that the SINR should be in the range $(0,\frac{1}{\delta^2})$, which also holds for MRC and MMSE receivers. Therefore, substituting (\ref{eq:cdf_ZF}) into (\ref{eq:cdf_rate}), we have
\begin{align}
R_\mathrm{ZF} &= \frac{T_dN_t}{\ln(2)T}\int_0^{\frac{1}{\delta^2}} \frac{e^{-\frac{c_0\gamma}{1-\delta^2\gamma}}}{1+\gamma}\sum_{k=0}^{N_r-N_t}\frac{\left(\frac{c_0\gamma}{1-\delta^2\gamma}\right)^k}{k!}d\gamma\\
&\overset{(a)}=\frac{T_dN_t}{\ln(2)T}\sum_{k=0}^{N_r-N_t}\left(\frac{c_0}{\delta^2}\right)^k\int_0^{\infty}\frac{e^{-x}x^k}{(1+x)(\delta^2+(\delta^2+1)x)}dx\label{eq:anal_ZF_step2}\\
&\overset{(b)}= \frac{T_dN_t}{\ln(2)T}\sum_{k=0}^{N_r-N_t}\left(\frac{c_0}{\delta^2}\right)^k\left(\int_0^{\infty}\frac{e^{-x}x^k}{x+\frac{\delta^2}{\delta^2+1}}dx - \int_0^{\infty}\frac{e^{-x}x^k}{x+1}dx \right),\label{eq:anal_ZF_step3}
\end{align}
where $(a)$ is obtained by a change of variables $\frac{c_0\gamma}{1-\delta^2\gamma}\rightarrow x$ and exchanging the order of integration and summation, and $(b)$ is obtained by expanding the integrand in (\ref{eq:anal_ZF_step2}). Finally, applying \cite[Eq.~(3.383.10)]{gradshteyn2007a} in (\ref{eq:anal_ZF_step3}) and the fact that $E_n(z)=z^{n-1}\Gamma(1-n,z)$, where $\Gamma(n,z)$ is the upper incomplete gamma function \cite[Eq. (8.350.2)]{gradshteyn2007a}, concludes the proof. The achievable rate of MMSE receivers can be deduced in a similar way.
\subsection{Achievable rate of MRC receivers}
Substituting the SINR with MRC in (\ref{eq:cdf_ZF}) into (\ref{eq:cdf_rate}), and applying the change of  variables $\frac{c_0\gamma}{1-\delta^2\gamma}\rightarrow x$, as well as changing the order of summation and integration, we have
\begin{equation}\label{eq:anal_MRC_step1}
R_\mathrm{MRC} =  \frac{T_dN_t}{\ln(2)T}\sum_{k=0}^{N_r-1}\sum_{p=0}^{k}\alpha_{p,k}\left(\frac{c_0}{\delta^2+1}\right)^{p+N_t}\frac{1}{\delta^2}\int_0^{\infty}\frac{e^{-x}x^k}{\left(x + \frac{c_0}{\delta^2}\right)\left(x+\frac{c_0}{\delta^2+1}\right)^{p+N_t}}dx.
\end{equation}
Note that for any $a, b\in\mathbb C, a\neq b$, we can easily prove the following identity:
\begin{equation}
\frac{1}{(x+a)(x+b)^n} = \frac{(b-a)^{-n}}{x+a} - \sum_{i=1}^{n}\frac{(b-a)^{-i}}{(x+b)^{n-i+1}}.
\end{equation}
Applying this result to (\ref{eq:anal_MRC_step1}), and using the integral identity in \cite[Eq. (2.3.6.9)]{Prudnikov1986integrals} followed by some algebraic manipulations, we arrive at the desired result.

\section{Proof of Theorem \ref{theorem:bothlarge}}\label{appendix:bothlarge}
\subsection{Deterministic equivalent of ZF SINR}
The singular value decomposition (SVD) of $\bar{\mathbf H}$ is denoted by $\bar{\mathbf H} = \mathbf U\boldsymbol\Sigma\mathbf V^H$, where $\mathbf U$ and $\mathbf V$ are unitary matrices, while $\boldsymbol \Sigma$ is a $N_r\times N_t$ diagonal matrix with the diagonal elements containing the singular values of $\bar{\mathbf H}$. Consequently, we have
\begin{align}
\left[\left(\bar {\mathbf H}^H\bar {\mathbf H}\right)^{-1}\right]_{k,k} \!= \left[\mathbf V\boldsymbol\Lambda\mathbf V^H\right]_{k,k} = \mathbf v_k^H\boldsymbol\Lambda\mathbf v_k \underset{(a)}{\stackrel{a.s.}\longrightarrow} \frac{1}{N_t}\textrm{tr}\left(\boldsymbol\Lambda\right) \stackrel{a.s.}\longrightarrow \frac{1}{N_r\!-\!N_t},
\end{align}
where $\boldsymbol\Lambda$ is a diagonal matrix with its diagonal elements being the eigenvalues of $\left(\bar {\mathbf H}^H\bar {\mathbf H}\right)^{-1}$. Note that $\mathbf v_k$ is known to satisfy the conditions of $\mathbf x$ in Lemma \ref{lemma:TraceLemma} \cite{Eldar2003ZFAsym_TIT}, therefore $(a)$ follows readily. Substituting the above result into the ZF SINR expressions in (\ref{eq:SINR_ZF}) completes the proof.

%The singular value decomposition (SVD) of $\bar{\mathbf H}$ is given by $\bar{\mathbf H} = \mathbf U\boldsymbol\Sigma\mathbf V^H$, where $\mathbf U$ and $\mathbf V$ are unitary matrices, while $\boldsymbol \Sigma$ is a $N_r\times N_t$ rectangular diagonal matrix with the diagonal elements given by the singular values of $\bar{\mathbf H}$. Consequently, we have
%\begin{equation}
%\left[\left(\bar {\mathbf H}^H\bar {\mathbf H}\right)^{-1}\right]_{k,k} = \left[\mathbf V\boldsymbol\Lambda\mathbf V^H\right]_{k,k} = \mathbf v_k\boldsymbol\Lambda\mathbf v_k^H \underset{(a)}{\stackrel{a.s.}\longrightarrow} \frac{1}{N_t}\textrm{tr}\{\boldsymbol\Lambda\} \stackrel{a.s.}\longrightarrow \frac{1}{N_r-N_t}.
%\end{equation}
%where $\boldsymbol\Lambda$ is a diagonal matrix with its diagonal elements being the eigenvalues of $\left(\bar {\mathbf H}^H\bar {\mathbf H}\right)^{-1}$. Note that $\mathbf v_k$ is known to have zero mean and unit variance \cite{Eldar2003ZFAsym_TIT}, and $(a)$ is achieved by a direct use of Lemma \ref{lemma:TraceLemma}. Substituting the above result into the ZF SINR expressions in (\ref{eq:SINR_ZF}) completes the proof.
\subsection{Deterministic equivalent of MRC SINR}
Rewriting the SINR of MRC, we have
\begin{align}
\gamma_{k,\mathrm{MRC}} &= \frac{\mid\bar{\mathbf h}_k^H\bar{\mathbf h}_k\mid^2}{\bar{\mathbf h}_k^H\left( (1+\delta^2)\bar{\mathbf H}\bar{\mathbf H}^H - \bar{\mathbf h}_k\bar{\mathbf h}_k^H + c_0\mathbf I_{N_r} \right)\bar{\mathbf h}_k}\\
&=\frac{\mid\bar{\mathbf h}_k^H\bar{\mathbf h}_k\mid^2}{(1+\delta^2)\bar{\mathbf h}_k^H\left(  \sum_{i=1, i\neq k}^{N_t}\bar{\mathbf h}_i\bar{\mathbf h}_i^H+ \frac{c_0}{1+\delta^2}\mathbf I_{N_r} \right)\bar{\mathbf h}_k + \delta^2\mid\bar{\mathbf h}_k^H\bar{\mathbf h}_k\mid^2}\\
&\underset{(a)}{\stackrel{a.s.}\longrightarrow} \frac{1}{\frac{1+\delta^2}{N_r^2}\mathrm{tr}\left( \sum_{i=1, i\neq k}^{N_t}\bar{\mathbf h}_i\bar{\mathbf h}_i^H+ \frac{c_0}{1+\delta^2}\mathbf I_{N_r} \right) + \delta^2}\\
&\underset{(b)}{\stackrel{a.s.}\longrightarrow}\frac{\beta}{1+\delta^2 + c_1 + \delta^2\beta},
\end{align}
where $(a)$ and $(b)$ are obtained by applying  Lemma \ref{lemma:TraceLemma}.
\subsection{Deterministic equivalent of MMSE SINR}
First, rewrite the MMSE SINR expression in (\ref{eq:MMSE_SINR_trasform1}) as
\begin{align}
\gamma_{k,\mathrm{MMSE}} = \frac{\tilde\gamma_{k,\mathrm{MMSE}} }{1+\delta^2+\delta^2\tilde\gamma_{k,\mathrm{MMSE}} },\label{eq:MMSE_SINR2}
\end{align}
where $\tilde\gamma_{k,\mathrm{MMSE}}\triangleq {\bar {\mathbf h}_k}^H\left(\sum_{i=1,i\neq k}^{N_t}{\bar {\mathbf h}_i}{\bar {\mathbf h}_i}^H + \frac{c_0}{1+\delta^2}  \mathbf I_{N_r}   \right)^{-1}{\bar {\mathbf h}_k}$.
As $N_r, N_t \rightarrow \infty$ with $\beta = \frac{N_r}{N_t} \in [1, \infty)$, we have
\begin{align}
\tilde\gamma_{k,\mathrm{MMSE}} {\stackrel{a.s.}\longrightarrow} m_{F_{\bar{\mathbf H}^H\bar{\mathbf H}}}\left(\frac{c_0}{1+\delta^2}\right) &= \mathrm{tr}\left(\left(\bar{\mathbf H}\bar{\mathbf H}^H + \frac{c_0}{1+\delta^2}\mathbf I_{N_r}\right)^{-1}\right)\label{eq:STtrans1}\\
&\overset{(a)}{=} \frac{1+\delta^2}{c_0}(N_r-N_t) + \mathrm{tr}\left(\left(\bar{\mathbf H}^H\bar{\mathbf H} + \frac{c_0}{1+\delta^2}\mathbf I_{N_t}\right)^{-1}\right)\notag\\
&\overset{(b)}{=} \frac{1+\delta^2}{c_0}(N_r-N_t) + \sum_{i=1}^{N_t}\frac{1+\delta^2}{c_0(1 + \bar{\mathbf h}_i^H\left(\bar{\mathbf H}\bar{\mathbf H}^H + \frac{c_0}{1+\delta^2}\mathbf I_{N_r}\right)^{-1}\bar{\mathbf h}_i)}\notag\\
&\underset{(c)}{\stackrel{a.s.}\longrightarrow}\frac{1+\delta^2}{c_0}(N_r-N_t) + \frac{(1+\delta^2)N_t}{c_0(1+m_{F_{\bar{\mathbf H}^H\bar{\mathbf H}}}(\frac{c_0}{1+\delta^2}))},\label{eq:STtrans4}
\end{align}
where $F_{\bar{\mathbf H}^H\bar{\mathbf H}}$ is the eigenvalue CDF of $\bar{\mathbf H}\bar{\mathbf H}^H$, and $m_{F_{\bar{\mathbf H}^H\bar{\mathbf H}}}\left(\frac{c_0}{1+\delta^2}\right)$ is the Stieltjes transform of $F_{\bar{\mathbf H}^H\bar{\mathbf H}}$ with argument $\frac{c_0}{1+\delta^2}$ \cite{Couillet2011a}, $(a)$ is obtained by using Lemma \ref{lemma:STLemma}, $(b)$ is obtained by using the matrix inversion identity (\ref{eq:matrixinversion}) and Lemma \ref{lemma:Rank1Lemma}, while $(c)$ is obtained by using Lemma \ref{lemma:TraceLemma}.

Combining (\ref{eq:STtrans1}) and (\ref{eq:STtrans4}), we can easily obtain
\begin{equation}
\tilde\gamma_{k,\mathrm{MMSE}} {\stackrel{a.s.}\longrightarrow} m_{F_{\bar{\mathbf H}^H\bar{\mathbf H}}}\left(\frac{c_0}{1+\delta^2}\right) = \frac{1}{2c_1}\left(-d + \sqrt{d^2 + \frac{4\beta c_1}{1+\delta^2}}  \right).
\end{equation}
Substituting this result into (\ref{eq:MMSE_SINR2}) and applying the continuous mapping theorem \cite[Theorem 2.3]{Van2000Asymptotic} concludes the proof.

%
%Rewrite the MMSE SINR expression
%\begin{equation}
%\gamma_{\mathrm{MMSE}} ={\bar {\mathbf h}_k}^H\left(\boldsymbol\Lambda + \delta^2{\bar {\mathbf h}_k}{\bar {\mathbf h}_k}^H    \right)^{-1}\bar {\mathbf h}_k,
%\end{equation}
%where $\boldsymbol\Lambda = (1+\delta^2)\sum_{i=1,i\neq k}^{N_t}{\bar {\mathbf h}_i}{\bar {\mathbf h}_i}^H  +c_0\mathbf I_{N_r}$. We find the following matrix inversion lemma \cite[Eq.~2.2]{Bai1995distribution_eigenvalue} particularly useful
%\begin{equation}
%\mathbf x^H(\mathbf A + \tau \mathbf x\mathbf x^H)^{-1} = \frac{\mathbf x^H\mathbf A^{-1}}{1+\tau\mathbf x^H\mathbf A^{-1}\mathbf x},
%\end{equation}
%where $\mathbf x\in\mathbb C^N$ is an arbitrary vector, $\mathbf A\in \mathbb C^{N\times N}$ is a Hermitian invertible matrix, and $\tau$ is a scalar. Hence we can write $\gamma_{\mathrm{MMSE}}$ as
%\begin{equation}
%\gamma_{\mathrm{MMSE}} = \frac{{\bar {\mathbf h}_k}^H\boldsymbol\Lambda^{-1}{\bar {\mathbf h}_k}}{1+\delta^2{\bar {\mathbf h}_k}^H\boldsymbol\Lambda^{-1}{\bar {\mathbf h}_k}} = \frac{X}{1+\delta^2X}.
%\end{equation}
%The CDF of X can be found in \cite{Gao1998MMSE}, using a similar approach as in the above proofs, we obtain the final result.

\bibliographystyle{IEEEtran}
\bibliography{IEEEabrv,Reference}

% Generated by IEEEtran.bst, version: 1.13 (2008/09/30)
\begin{thebibliography}{10}
\providecommand{\url}[1]{#1}
\csname url@samestyle\endcsname
\providecommand{\newblock}{\relax}
\providecommand{\bibinfo}[2]{#2}
\providecommand{\BIBentrySTDinterwordspacing}{\spaceskip=0pt\relax}
\providecommand{\BIBentryALTinterwordstretchfactor}{4}
\providecommand{\BIBentryALTinterwordspacing}{\spaceskip=\fontdimen2\font plus
\BIBentryALTinterwordstretchfactor\fontdimen3\font minus
  \fontdimen4\font\relax}
\providecommand{\BIBforeignlanguage}[2]{{%
\expandafter\ifx\csname l@#1\endcsname\relax
\typeout{** WARNING: IEEEtran.bst: No hyphenation pattern has been}%
\typeout{** loaded for the language `#1'. Using the pattern for}%
\typeout{** the default language instead.}%
\else
\language=\csname l@#1\endcsname
\fi
#2}}
\providecommand{\BIBdecl}{\relax}
\BIBdecl

\bibitem{xinlin2014training_ICC}
X.~Zhang, M.~Matthaiou, M.~Coldrey, and E.~Bj{\"o}rnson, ``Impact of residual
  transmit {RF} impairments on training-based {MIMO} systems,'' in
  \emph{Proc.~IEEE ICC}, June 2014, pp. 4752--4757.

\bibitem{telatar1999capacity}
E.~Telatar, ``Capacity of multi-antenna {Gaussian} channels,'' \emph{Europ.
  Trans. Telecom.}, vol.~10, no.~6, pp. 585--595, Nov.-Dec. 1999.

\bibitem{foschini1998a}
G.~J. Foschini and M.~J. Gans, ``On limits of wireless communications in a
  fading environment when using multiple antennas,'' \emph{Wireless Pers.
  Commun.}, vol.~6, no.~3, pp. 311--335, Mar. 1998.

\bibitem{hassibi2003}
B.~Hassibi and B.~M. Hochwald, ``How much training is needed in
  multiple-antenna wireless links?'' \emph{{IEEE} Trans. Inf. Theory}, vol.~49,
  no.~4, pp. 951--963, Apr. 2003.

\bibitem{biguesh2006training}
M.~Biguesh and A.~B. Gershman, ``Training-based {MIMO} channel estimation: {A}
  study of estimator tradeoffs and optimal training signals,'' \emph{{IEEE}
  Trans. Signal Process.}, vol.~54, no.~3, pp. 884--893, Mar. 2006.

\bibitem{mikael1}
M.~Coldrey and P.~Bohlin, ``Training-based {MIMO} systems--{P}art {I}:
  {P}erformance comparison,'' \emph{{IEEE} Trans. Signal Process.}, vol.~55,
  no.~11, pp. 5464--5476, Nov. 2007.

\bibitem{emil2010est}
E.~Bj{\"{o}}rnson and B.~Ottersten, ``A framework for training-based estimation
  in arbitrarily correlated {R}ician {MIMO} channels with {R}ician
  disturbance,'' \emph{{IEEE} Trans. Signal Process.}, vol.~58, no.~3, pp.
  1807--1820, Nov. 2010.

\bibitem{Love2014Training1_JSTSP}
J.~Choi, D.~J. Love, and P.~Bidigare, ``Downlink training techniques for {FDD}
  massive {MIMO} systems: Open-loop and closed-loop training with memory,''
  \emph{{IEEE} J. Sel. Topics Signal Process.}, vol.~8, no.~5, pp. 802--814,
  Oct. 2014.

\bibitem{Love2014Training2_JSTSP}
S.~Noh, M.~D. Zoltowski, Y.~Sung, and D.~J. Love, ``Pilot beam pattern design
  for channel estimation in massive {MIMO} systems,'' \emph{{IEEE} J. Sel.
  Topics Signal Process.}, vol.~8, no.~5, pp. 787--801, Oct. 2014.

\bibitem{schenk2008rf}
T.~Schenk, \emph{RF Imperfections in High-Rate Wireless Systems: Impact and
  Digital Compensation}.\hskip 1em plus 0.5em minus 0.4em\relax Springer, 2008.

\bibitem{studer2010residual}
C.~Studer, M.~Wenk, and A.~Burg, ``{MIMO} transmission with residual
  transmit-{RF} impairments,'' in \emph{Proc.~ITG/IEEE Work. Smart Ant. (WSA)},
  Feb. 2010, pp. 189--196.

\bibitem{Jingya2014IQI_TCOM}
J.~Li, M.~Matthaiou, and T.~Svensson, ``{I/Q} imbalance in {AF} dual-hop
  relaying: {P}erformance analysis in {N}akagami-m fading,'' \emph{{IEEE}
  Trans. Commun.}, vol.~62, no.~3, pp. 836--847, Mar. 2014.

\bibitem{Giuseppe2014PA_TCOM}
G.~Durisi, A.~Tarable, C.~Camarda, R.~Devassy, and G.~Montorsi, ``Capacity
  bounds for {MIMO} microwave backhaul links affected by phase noise,''
  \emph{{IEEE} Trans. Commun.}, vol.~62, no.~3, pp. 920--929, Mar. 2014.

\bibitem{Bussgang}
J.~J. Bussgang, ``Crosscorrelation functions of amplitude-distorted {Gaussian}
  signals,'' \emph{Technical Report No. 216, Research Laboratory of
  Electronics, Massachusetts Institute of Technology, Cambridge, MA}, Mar.
  1952.

\bibitem{Ochiai2002Clipped_TCOM}
H.~Ochiai and H.~Imai, ``Performance analysis of deliberately clipped {OFDM}
  signals,'' \emph{{IEEE} Trans. Commun.}, vol.~50, no.~1, pp. 89--101, Jan.
  2002.

\bibitem{Wenyi2012Framework_TCOM}
W.~Zhang, ``A general framework for transmission with transceiver distortion
  and some applications,'' \emph{{IEEE} Trans. Commun.}, vol.~60, no.~2, pp.
  384--399, Feb. 2012.

\bibitem{Ulf2014Impairments_GLOBECOM}
U.~Gustavsson, C.~Sanchez-Perez, T.~Eriksson, F.~Athley, G.~Durisi, P.~Landin,
  K.~Hausmair, C.~Fager, and S.~Lars, ``On the impact of hardware impairments
  on massive {MIMO},'' in \emph{Proc.~IEEE GLOBECOM}, Dec. 2014.

\bibitem{Emil2013imp_COML}
E.~Bj{\"{o}}rnson, P.~Zetterberg, M.~Bengtsson, and B.~Ottersten, ``Capacity
  limits and multiplexing gains of {MIMO} channels with transceiver
  impairments,'' \emph{{IEEE} Commun. Lett.}, vol.~17, no.~1, pp. 91--94, Jan.
  2013.

\bibitem{xinlin2014capacity_ICC}
X.~Zhang, M.~Matthaiou, E.~Bj{\"o}rnson, M.~Coldrey, and M.~Debbah, ``On the
  {MIMO} capacity with residual transceiver hardware impairments,'' in
  \emph{Proc.~IEEE ICC}, June 2014, pp. 5310--5316.

\bibitem{Michalis2013Dhop_TCOM}
E.~Bj{\"o}rnson, M.~Matthaiou, and M.~Debbah, ``A new look at dual-hop
  relaying: Performance limits with hardware impairments,'' \emph{{IEEE} Trans.
  Commun.}, vol.~61, no.~11, pp. 4512--4525, Nov. 2013.

\bibitem{Michalis2013TwoWay_COML}
M.~Matthaiou, A.~Papadogiannis, E.~Bj{\"o}rnson, and M.~Debbah, ``Two-way
  relaying under the presence of relay transceiver hardware impairments,''
  \emph{{IEEE} Commun. Lett.}, vol.~17, no.~6, pp. 1136--1139, June 2013.

\bibitem{emil2013est}
E.~Bj{\"{o}}rnson, J.~Hoydis, M.~Kountouris, and M.~Debbah, ``Massive {MIMO}
  systems with non-ideal hardware: Energy efficiency, estimation, and capacity
  limits,'' \emph{{IEEE} Trans. Inf. Theory}, no.~11, pp. 7112--7139, Nov.
  2014.

\bibitem{Jorswieck2004Transmission_TSP}
E.~A. Jorswieck and H.~Boche, ``Optimal transmission strategies and impact of
  correlation in multiantenna systems with different types of channel state
  information,'' vol.~52, no.~12, pp. 3440--3453, Dec. 2004.

\bibitem{holma2011LTE}
H.~Holma and A.~Toskala, \emph{{LTE} for {UMTS}: {E}volution to
  {LTE}-{A}dvanced}.\hskip 1em plus 0.5em minus 0.4em\relax Wiley, 2011.

\bibitem{Agilent8Hints}
``Agilent 8 hints for making and interpreting {EVM} measurements: application
  note,'' {A}vailable [Online]:
  http://cp.literature.agilent.com/litweb/pdf/5989-3144EN.pdf.

\bibitem{kaybook1}
S.~M. Kay, \emph{Fundamentals of Statistical Signal Processing, Volume 1:
  Estimation theory}.\hskip 1em plus 0.5em minus 0.4em\relax Prentice Hall PTR,
  1993.

\bibitem{Mckay2009LinearReceiver}
R.~H. Louie, M.~R. McKay, and I.~B. Collings, ``Maximum sum-rate of {MIMO}
  multiuser scheduling with linear receivers,'' \emph{{IEEE} Trans. Commun.},
  vol.~57, no.~11, pp. 3500--3510, Nov. 2009.

\bibitem{gradshteyn2007a}
I.~S. Gradshteyn and I.~M. Ryzhik, \emph{Table of Integrals, Series, and
  Products}, 7th~ed.\hskip 1em plus 0.5em minus 0.4em\relax Academic {P}ress,
  2007.

\bibitem{winters1994diversity}
J.~H. Winters, J.~Salz, and R.~D. Gitlin, ``The impact of antenna diversity on
  the capacity of wireless communication systems,'' \emph{{IEEE} Trans.
  Commun.}, vol.~42, no. 234, pp. 1740--1751, Aug. 1994.

\bibitem{Ping2006MMSE}
P.~Li, D.~Paul, R.~Narasimhan, and J.~Cioffi, ``On the distribution of {SINR}
  for the {MMSE} {MIMO} receiver and performance analysis,'' \emph{{IEEE}
  Trans. Inf. Theory}, vol.~52, no.~1, pp. 271--286, Jan. 2006.

\bibitem{Marzetta2010Massive_TWCOM}
T.~L. Marzetta, ``Noncooperative cellular wireless with unlimited numbers of
  base station antennas,'' \emph{{IEEE} Trans. Commun.}, vol.~9, no.~11, pp.
  3590--3600, Nov. 2010.

\bibitem{Hoydis2013Massive}
J.~Hoydis, S.~{t}en Brink, and M.~Debbah, ``Massive {MIMO} in the {UL}/{DL} of
  cellular networks: {H}ow many antennas do we need?'' \emph{{IEEE} J. Sel.
  Areas Commun.}, vol.~31, no.~2, pp. 160--171, Feb. 2013.

\bibitem{Bai1993Smallest}
Z.~D. Bai and Y.~Q. Yin, ``Limit of the smallest eigenvalue of a large
  dimensional sample covariance matrix,'' \emph{Ann. Probab.}, vol.~21, no.~3,
  pp. 1275--1294, July 1993.

\bibitem{Eldar2003ZFAsym_TIT}
Y.~C. Eldar and A.~M. Chan, ``On the asymptotic performance of the
  decorrelator,'' \emph{{IEEE} Trans. Inf. Theory}, vol.~49, no.~9, pp.
  2309--2313, Sept. 2003.

\bibitem{Van2000Asymptotic}
A.~W. Van~der Vaart, \emph{Asymptotic statistics}.\hskip 1em plus 0.5em minus
  0.4em\relax Cambridge {U}niversity {P}ress, 2000.

\bibitem{Billingsley2008Probability}
P.~Billingsley, \emph{Probability and measure}.\hskip 1em plus 0.5em minus
  0.4em\relax John Wiley \& Sons, 2008.

\bibitem{Boyd2004Convex}
S.~P. Boyd and L.~Vandenberghe, \emph{Convex optimization}.\hskip 1em plus
  0.5em minus 0.4em\relax Cambridge {U}niversity {P}ress, 2004.

\bibitem{Jakob2011Training_TSP}
J.~Hoydis, M.~Kobayashi, and M.~Debbah, ``Optimal channel training in uplink
  network {MIMO} systems,'' \emph{{IEEE} Trans. Signal Process.}, vol.~59,
  no.~6, pp. 2824--2833, June 2011.

\bibitem{Bai2009RMBook}
Z.~Bai and J.~W. Silverstein, \emph{Spectral {A}nalysis of {L}arge
  {D}imensional {R}andom {M}atrices}.\hskip 1em plus 0.5em minus 0.4em\relax
  Springer, 2009.

\bibitem{Bai1995Empirical}
J.~W. Silverstein and Z.~Bai, ``On the empirical distribution of eigenvalues of
  a class of large dimensional random matrices,'' \emph{Journal of Multivariate
  analysis}, vol.~54, no.~2, pp. 175--192, Aug. 1995.

\bibitem{Couillet2011a}
R.~Couillet and M.~Debbah, \emph{Random Matrix Methods for Wireless
  Communications}.\hskip 1em plus 0.5em minus 0.4em\relax Cambridge
  {U}niversity {P}ress, 2011.

\bibitem{Shah1998MRC}
A.~Shar and A.~M. Haimovich, ``Performance analysis of optimum combining in
  wireless communications with {R}ayleigh fading and cochannel interference,''
  \emph{{IEEE} Trans. Commun.}, vol.~46, no.~4, pp. 473--479, Apr. 1998.

\bibitem{Gao1998MMSE}
H.~Gao, P.~J. Smith, and M.~V. Clark, ``Theoretical reliability of {MMSE}
  linear diversity combining in {R}ayleigh-fading additive interference
  channels,'' \emph{{IEEE} Trans. Commun.}, vol.~46, no.~5, pp. 666--672, May
  1998.

\bibitem{Prudnikov1986integrals}
A.~Prudnikov, Y.~A. Brychkov, and O.~Marichev, ``Integrals and series, vol. 1:
  Elementary functions,'' \emph{Gordon and Breach Science Publishers}, 1986.

\end{thebibliography}

\end{document}